\title{Effective Guessing Has Unlikely Consequences}
\author[1]{Andr\'as Z. Salamon}
\author[2]{Michael Wehar}
\affil[1]{
  School of Computer Science, University of St Andrews, UK
  \href{mailto:Andras.Salamon@st-andrews.ac.uk}{Andras.Salamon@st-andrews.ac.uk}
}
\affil[2]{
  Computer Science Department, Swarthmore College, USA
  \href{mailto:mwehar1@swarthmore.edu}{mwehar1@swarthmore.edu}
}
\date{}
\newcommand{\cc}[1]{\ensuremath{\mathsf{#1}}}
\newcommand{\dtime}{\ensuremath{\cc{DTIME}}\xspace}
\newcommand{\ntime}{\ensuremath{\cc{NTIME}}\xspace}
\newcommand{\ntigu}{\ensuremath{\cc{NTIGU}}\xspace}
\newcommand{\PTIME}{\ensuremath{\cc{P}}\xspace}
\newcommand{\NPTIME}{\ensuremath{\cc{NP}}\xspace}
\newcommand{\polylog}{\ensuremath{\mathrm{polylog}}}
\newcommand{\tuple}[1]{\ensuremath{\langle{#1}\rangle}}
\newcommand{\eps}{\varepsilon}
\newcommand{\zap}[1]{}
\newcommand{\cp}[1]{\textsc{#1}}
\newcommand{\sat}{\cp{SAT}\xspace}
\newcommand{\N}{\mathbb{N}}
\newcommand{\size}[1]{\lvert{#1}\rvert}
\newcommand{\tiwi}{\ensuremath{\cc{TIWI}}\xspace}
\newcommand{\shortdash}{\hbox{-}}
\newcommand{\pad}[2]{\text{pad}_{#1}\shortdash #2}
\newcommand{\xypair}{\tuple{x,y}}
\newcommand{\set}[2]{\{ \; #1 \; \vert \; #2 \; \}}
\newcommand{\proj}[2]{#1[#2]}
\newcommand{\starred}{^{*}}
\newtheorem{proposition}{Proposition}
\newtheorem{lemma}[proposition]{Lemma}
\newtheorem{theorem}[proposition]{Theorem}
\newtheorem{corollary}[proposition]{Corollary}
\newtheorem{conjecture}[proposition]{Conjecture}
\newtheorem{definition}[proposition]{Definition}
\newtheorem{note}{Note}
\newtheorem{example}{Example}
\begin{document}

\maketitle

\begin{abstract}
A classic result of Paul, Pippenger, Szemer\'edi and Trotter ~states that $\dtime(n) \subsetneq \ntime(n)$.
The natural question then arises: could the inclusion $\dtime(t(n)) \subseteq \ntime(n)$ hold for some superlinear time-constructible function $t(n)$?
If such a function $t(n)$ does exist, then there also exist effective nondeterministic guessing strategies to speed up deterministic computations.
In this work, we prove limitations on the effectiveness of nondeterministic guessing to speed up deterministic computations by showing that the existence of effective nondeterministic guessing strategies would have unlikely consequences.
In particular, we show that if a subpolynomial amount of nondeterministic guessing could be used to speed up deterministic computation by a polynomial factor, then $\PTIME \subsetneq \ntime(n)$.
Furthermore, even achieving a logarithmic speedup at the cost of making every step nondeterministic would show that $\sat \in \ntime(n)$ under appropriate encodings.
Of possibly independent interest, under such encodings we also show that $\sat$ can be decided in $O(n\log n)$ steps on a nondeterministic multitape Turing machine, improving on the well-known $O(n(\log n)^c)$ bound for some constant but undetermined exponent $c \ge 1$.%
\end{abstract}

\section{Dedication for Alan L.~Selman}
\label{sec:dedication}

This paper is dedicated to the memory of Alan L.~Selman.
The second author was a student in Professor Alan Selman's graduate course titled \textit{Introduction to the Theory of Computation} at University at Buffalo during Fall 2013.  Of the many fond memories of Professor Selman's instruction, his careful treatment of computational concepts and their technical details especially stood out.
During the course, Professor Selman's discussion of one particular topic is most clearly remembered and directly relates to this work, namely the topic of Linear Speedup Theorems for both deterministic and nondeterministic Turing machines (which can be found in Section 5.1 of \cite{Homer2011:computability}).  This discussion from Professor Selman is particularly relevant and motivating to this paper because the following investigates when additional nondeterminism could lead to improved efficiency, and the proof of the Linear Speedup Theorem for nondeterministic Turing machines (Theorem 5.3 of \cite{Homer2011:computability}) provides an example where the use of additional nondeterminism leads to a slightly faster simulation.

\section{Introduction}

How powerful is nondeterminism?
To make progress investigating this general philosophical question we have to consider a more focused technical question: \emph{when is it possible to replace some portion of a deterministic computation by nondeterministic guessing to reduce the total computation time?}

The Linear Speedup Theorem tells us that computations can be sped up by any constant factor by using larger tape alphabets~\cite[Theorem 2]{Hartmanis1965:computational}.
Conversely, the tight form of the Deterministic Time Hierarchy Theorem shows that it is not generally possible to achieve a speedup of more than a constant factor~\cite{Furer1982:tight}.
These classic results leave open the possibility that a computation could be further sped up by increasing some other resource such as nondeterminism.

Along with the example from Section~\ref{sec:dedication}, there are some cases where additional nondeterminism is known to speed up computation.
\sat is an example of a language for which, unless $\PTIME = \NPTIME$, we can speed up a deterministic decision procedure superpolynomially by instead guessing an assignment and verifying that the guess satisfies every clause.
Another example is deciding if an item occurs in a list.
With a random access model of computation, search can be sped up exponentially by guessing the position of the item in the input list and verifying this guess.
However, to the best of our knowledge, no general speed-up result has been proven for languages in \PTIME.

\section{Main Contributions}

We express our results in terms of complexity classes defined by joint bounds on time and nondeterminism.
Let $\ntigu(t(n),w(n))$ denote the class consisting of languages which can be decided by multitape nondeterministic Turing machines operating with time bound $O(t(n))$ and using at most $w(n)$ nondeterministic bits, for $n$-bit inputs.
(We follow prior usage with this definition~\cite{Fortnow:2016}.)

An \emph{effective guessing hypothesis} is a statement of the form: it is possible to speed up a computation by using more nondeterminism.
We present two main results conditional on two different effective guessing hypotheses, one somewhat stronger than the other.

Our first result is that if even a small polynomial speedup can be achieved by introducing a polylogarithmic amount of nondeterminism, then we could decide all of \PTIME using nondeterministic linear time.

\begin{theorem}
\label{theorem:p:nlin}
If there is some constant $c>1$ such that
 \[
  \dtime(n^c) \subseteq \ntigu(n,\polylog(n)),
\]
then $\PTIME \subsetneq \ntime(n)$.
\end{theorem}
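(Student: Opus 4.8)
The plan is to prove the statement in two stages: first deduce from the hypothesis that $\PTIME \subseteq \ntime(n)$, and then observe that this inclusion must be strict. For the inclusion I would argue, by induction on $j \ge 1$, that
\[
  \dtime(n^{c^j}) \subseteq \ntigu(n,\polylog(n)).
\]
The base case $j=1$ is exactly the hypothesis. Since $c>1$ the exponents $c^j$ are unbounded, so once the induction is complete every polynomial is dominated by some $n^{c^j}$, whence $\PTIME \subseteq \ntigu(n,\polylog(n)) \subseteq \ntime(n)$, the last inclusion being immediate from the definition of $\ntigu$ (the polylogarithmically many guesses are made inside the linear time budget).

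For the inductive step I would assume $\dtime(n^{c^j}) \subseteq \ntigu(n,\polylog(n))$ and take $L \in \dtime(n^{c^{j+1}})$, decided by a deterministic machine $M$. First I would pad: let $\hat{L}$ consist of the strings $\pad{c}{x}$ for $x\in L$, where $\pad{c}{x}$ extends $x$ to length $|x|^c$ by a block of fixed symbols. Given an input of length $N$ one recovers $x$ of length $N^{1/c}$ and runs $M$, so $\hat{L}\in\dtime(N^{c^j})$; by the inductive hypothesis $\hat{L}$ then has a nondeterministic decider running in time $O(N)$ with $\polylog(N)$ guesses. Composing with the padding map, $L$ is decided by a nondeterministic machine $N_1$ that on input $x$ of length $m$ writes $\pad{c}{x}$ and runs that decider, all in time $O(m^c)$ and using $w(m)=\polylog(m)$ guess bits, which I normalise to be made up front.

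The running time of $N_1$ is only $O(m^c)$, not linear, so the heart of the argument — and the step I expect to be the main obstacle to get exactly right — is a second application of the hypothesis to collapse this. I would internalise the guessing by setting $D = \{\, \tuple{x,g} : g\in\{0,1\}^{w(|x|)} \text{ and } N_1 \text{ accepts } x \text{ along the path selected by } g \,\}$. This $D$ is decided \emph{deterministically} — simulate $N_1$ with its nondeterministic choices read off $g$ — in time $O(|x|^c)$. The reason the subpolynomial bound on nondeterminism is indispensable is precisely that, $w$ being polylogarithmic, $|\tuple{x,g}| = |x|+\polylog(|x|) = \Theta(|x|)$, so measured against its own input length $D$ still lies in $\dtime(n^c)$; with polynomially much guessing the appended string would be too long and this bound would break. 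Applying the hypothesis, $D\in\ntigu(n,\polylog(n))$ via some machine $M_D$, and I can then decide $L$ on input $x$ as follows: nondeterministically guess $g\in\{0,1\}^{w(|x|)}$, write $\tuple{x,g}$ on a work tape in time $O(|x|)$ (possible since $g$ is short), and run $M_D$ on it in time $O(|x|)$ with a further $\polylog(|x|)$ guesses. This accepts iff $x\in L$, runs in linear time, and uses $w(|x|)+\polylog(|x|)=\polylog(|x|)$ guesses in total, so $L\in\ntigu(n,\polylog(n))$, completing the induction. (Tracking the polylogarithmic exponents shows they increase by only a bounded amount at each of the finitely many recursion steps used for a given language, so they remain fixed polylogarithmic functions throughout, which is all that is needed.)

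Finally, the inclusion $\PTIME\subseteq\ntime(n)$ is automatically strict, since $\PTIME\neq\ntime(n)$ unconditionally. Indeed, if $\PTIME=\ntime(n)$ then by the nondeterministic time hierarchy theorem there is $K\in\ntime(n^2)\setminus\ntime(n)$; padding $K$ to length $|x|^2$ puts the padded language in $\ntime(n)=\PTIME$, hence in $\dtime(n^a)$ for some $a$, and unpadding gives $K\in\dtime(n^{2a})\subseteq\PTIME=\ntime(n)$, a contradiction. Together with $\PTIME\subseteq\ntime(n)$ this yields $\PTIME\subsetneq\ntime(n)$, as required.
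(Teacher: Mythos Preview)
Your proof is correct and follows essentially the same strategy as the paper: iterate a padding step together with a speedup step (internalise the guesses so the hypothesis can be re-applied to the resulting deterministic verifier) to obtain $\dtime(n^{c^j})\subseteq\ntigu(n,\polylog(n))$ for all $j$, and then derive strictness from padding plus the nondeterministic time hierarchy. The paper wraps the bookkeeping in its $\tiwi$ framework and tracks witness growth more explicitly (which lets it state Theorem~\ref{theorem:nlin} for arbitrary subpolynomial witness bounds), and in the induction it swaps the roles --- padding via the original hypothesis and speeding up via the inductive one --- but the underlying mechanism is the same as yours.
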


Our second result greatly weakens the effective guessing hypothesis while still yielding a surprising conclusion.  In particular, the premise is weakened from polynomial to logarithmic speedup and from polylogarithmic to linear size witnesses.
Being able to speed up computation by a logarithmic factor, even at the cost of making essentially every step nondeterministic, would imply a breakthrough nondeterministic algorithm for \sat on multitape Turing machines.
This kind of effective guessing would allow us to overcome a barrier to progress that has stood for more than four decades.

\begin{restatable}{theorem}{weakegh}
\label{theorem:sat:nlin}
If $\dtime(n\log n) \subseteq \ntime(n)$, then $\sat \in \ntime(n)$.
\end{restatable}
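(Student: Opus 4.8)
The plan is to derive $\sat\in\ntime(n)$ from a \emph{single} use of the hypothesis, applied not to $\sat$ itself but to the deterministic verifier underlying a guess-efficient near-linear-time nondeterministic algorithm for $\sat$. The two ingredients are: (i) under the paper's encodings, $\sat$ is decided by a nondeterministic multitape machine running in time $O(n\log n)$ and making only $O(n)$ nondeterministic moves (this is the ``$O(n\log n)$'' result announced in the abstract, which must be proved directly, not via the hypothesis); and (ii) the deterministic verifier associated with such a machine therefore lies in $\dtime(n\log n)$, so the hypothesis upgrades it to a genuinely linear-time nondeterministic machine.

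\textbf{Step 1 (guess-efficient quasi-linear $\sat$).} First I would establish that there is a deterministic verifier $V$ which, given a formula $\phi$ of length $n$ together with a certificate $w$ of length $O(n)$, runs in deterministic time $O(n\log n)$, and such that $\langle\phi,w\rangle$ is accepted for some $w$ exactly when $\phi$ is satisfiable. The certificate packages a satisfying assignment together with auxiliary data, in particular the list of variable occurrences of $\phi$ \emph{already sorted by variable index}, so that $V$ need only: check in one linear pass that this list is sorted and is a permutation of the occurrence list of $\phi$; merge it in one pass against the guessed assignment to confirm every occurrence is annotated with the correct truth value; and sweep the clauses once to confirm each has a satisfied literal. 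Each of these is a constant number of linear passes over strings of length $O(n)$, up to the per-index comparison cost; it is the comparison cost and the bookkeeping that leave the residual $\log n$ factor and that force the ``appropriate encodings'' caveat. (One could instead start only from the classical $O(n\,\polylog n)$ bound and use a preliminary padding argument against the hypothesis to knock the exponent down, but that reintroduces super-linear guessing, so it does not by itself reach $\ntime(n)$; the direct $O(n\log n)$, linear-guess construction is what the argument needs.)

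\textbf{Steps 2--3 (apply the hypothesis, then reassemble).} Let $N=\size{\langle\phi,w\rangle}$. Since $\size{\phi}=n$ and $\size{w}=O(n)$, we have $N=\Theta(n)$, so $V$ runs in deterministic time $O(n\log n)=O(N\log N)$; hence the language of $V$ lies in $\dtime(n\log n)$, and by the hypothesis it lies in $\ntime(n)$, witnessed by a nondeterministic multitape machine $\mathcal{M}$ of running time $O(N)$. Now build a machine for $\sat$: on input $\phi$, nondeterministically write a candidate certificate $w$ of length $O(n)$, form $\langle\phi,w\rangle$ (of length $\Theta(n)$) in linear time, run $\mathcal{M}$ on it, and accept iff $\mathcal{M}$ accepts. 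Guessing $w$ costs $O(n)$, formatting costs $O(n)$, and simulating $\mathcal{M}$ costs $O(N)=O(n)$, so the whole machine runs in nondeterministic time $O(n)$; and it accepts $\phi$ iff some $w$ makes $\mathcal{M}$ accept $\langle\phi,w\rangle$ iff $\phi\in\sat$. Therefore $\sat\in\ntime(n)$.

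\textbf{Main obstacle.} The real difficulty is Step~1: pinning down encodings of formulas and certificates so that the verification step---matching every variable occurrence to its assigned value, which naively is a sorting step costing roughly $\Theta(n\log^{2}n)$ on a multitape machine---runs in deterministic time $O(n\log n)$ with a certificate of only \emph{linear} length, and so that all the reformatting in Steps~1 and~3 is genuinely linear-time on a multitape machine with no hidden logarithmic overhead. Everything else is routine once the encodings are fixed, and it is essential that the hypothesis be invoked exactly once: iterating it would make the certificate super-linear (its length would grow to match the new running time), at which point the verifier is already linear-time in its own input and the hypothesis can no longer improve it.
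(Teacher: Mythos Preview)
Your high-level architecture matches the paper's exactly: show that \sat has a deterministic verifier running in $O(n\log n)$ time on inputs of the form $\langle\phi,w\rangle$ with $\size{w}=O(n)$ (this is the paper's Lemma~\ref{lemma:sattiwi}, phrased there as $\sat\in\tiwi(n\lg n,4n/\lg n)\subseteq\tiwi(n\lg n,n)$), then apply the hypothesis once to the verifier's language and re-attach the guess (this is the paper's Weak Speedup Lemma~\ref{lemma:speedup:weak} and Corollary~\ref{cor:egh:bettersat}). Your Steps~2--3 are precisely that lemma, just unwound.

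Where you diverge is in the implementation of Step~1. The paper's certificate is \emph{only} the assignment (in fact just $4n/\lg n$ bits, via the counting argument of Lemma~\ref{lemma:satliterals}), and the verifier does the sorting itself; a careful mergesort analysis (Lemma~\ref{lemma:sorting}) pins the sort at $O(m(\lg m)(\lg n))$, which for $m=O(n/\lg n)$ items of $\lg n$ bits gives $O(n\lg n)$. Your variant instead pushes the sorted occurrence list into the certificate and asks the verifier to ``check in one linear pass that this list is sorted and is a permutation of the occurrence list of $\phi$.'' Sortedness is indeed a linear pass, but verifying that one sequence is a permutation of another is not: on a multitape machine the only general way to certify a permutation deterministically is to sort one side (or both) and compare, which costs $O(n\log n)$ already. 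So your guessed sorted list buys nothing---once you acknowledge the permutation check forces a sort, you are back to the paper's verifier, and the extra guessed data is redundant. This is not a fatal gap (the time budget is still $O(n\log n)$ and the certificate is still $O(n)$ under the intended encodings), but it means your Step~1 sketch doesn't actually avoid the sorting lemma you were hoping to sidestep; the paper's route through Lemmas~\ref{lemma:satliterals} and~\ref{lemma:sorting} is the clean way to discharge it.
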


Along the way to proving Theorem~\ref{theorem:sat:nlin} we also derive an improved upper bound for \sat.

\begin{restatable}{theorem}{satbound}
\label{theorem:sat}
$\sat \in \ntigu(n\log n, O(n/\log n))$.
\end{restatable}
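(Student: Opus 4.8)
The plan is for the machine to guess a satisfying assignment in compressed form and then verify it deterministically by a sorting-based evaluation of the formula. The enabling observation is a counting one: under the intended encoding every reference to a variable carries an index of $\Theta(\log n)$ bits, so an $n$-bit instance has only $m = O(n/\log n)$ distinct variables and only $q = O(n/\log n)$ literal occurrences in total. The machine therefore guesses a truth assignment as a bit vector of length $m$, spending only $O(n/\log n)$ nondeterministic bits, which already meets the witness budget in the claim; what remains is to verify the guess in $O(n\log n)$ steps on a multitape Turing machine.

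The verification is a relational join followed by a sweep. First I would scan the input (taking the variables to be numbered $1,\dots,m$, as the encoding permits) and emit one record $(v,p,\sigma)$ for each literal occurrence, recording its variable index, its position in left-to-right order, and its sign. Then: sort these $q$ records by $v$; merge the result against the guessed assignment, itself a list sorted by variable index, so that each record acquires the truth value of its variable and hence the truth value of the literal; sort the records back into position order; and finally re-scan the formula in sync with this list, forming for each clause the disjunction of its literals' values and accepting if and only if every clause is satisfied (an empty clause, noticed during the scan, forces rejection).

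The point of the whole construction is that every list manipulated above has $O(n/\log n)$ records of $O(\log n)$ bits each, hence occupies only $O(n)$ bits. A standard multitape merge sort of $N$ records runs in $O(\log N)$ passes, and here each pass sweeps an $O(n)$-bit list in $O(n)$ time; since $\log N = O(\log n)$, each of the two sorts costs $O(n\log n)$, and the tokenizing scan, the merge, and the final clause-checking sweep are all linear and thus absorbed. Adding up the phases yields running time $O(n\log n)$ together with $O(n/\log n)$ nondeterministic bits, which is exactly $\sat \in \ntigu(n\log n, O(n/\log n))$.

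The main obstacle I anticipate is not a single clever idea but the accumulation of implementation details: one must check that tokenizing the input into records, the two sorts, the merge, and the synchronized final sweep can each be carried out within $O(n\log n)$ steps on a fixed number of tapes, and---more delicately---that the chosen encoding really does force $m,q = O(n/\log n)$ so that the record lists stay $O(n)$ bits wide. It is precisely this width bound that buys the running time $O(n\log n)$; without it the records span $\Theta(n\log n)$ bits, the merge sort picks up an extra logarithmic factor, and one is left with the weaker $O(n(\log n)^c)$ bound that the theorem improves upon.
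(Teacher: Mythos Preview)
Your proposal is correct and matches the paper's approach: guess an $O(n/\log n)$-bit assignment and verify it by a sort--merge--sort--scan procedure in $O(n\log n)$ time on a multitape machine, exactly as the paper does (following Schnorr's scheme, with a careful $O(m(\log m)(\log n))$ mergesort bound).

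The one noteworthy difference is how the $O(n/\log n)$ bound on the number of variables is obtained. You assume an encoding in which every variable reference occupies $\Theta(\log n)$ bits, which immediately bounds both the number of distinct variables and the number of literal occurrences. The paper instead works with any ``reasonable'' encoding---one that distinguishes formulas differing only by a permutation of variable names---and derives the bound $v \le 4n/\lg n$ from a counting argument: such an encoding must distinguish all $v!$ permutations, so $\lg(v!) \le n$, and Stirling/Robbins then forces $v = O(n/\log n)$. Your shortcut is cleaner and perfectly adequate for a fixed-width-index encoding, but the paper's argument buys generality (it covers, e.g., DIMACS-style variable-width indices) at the cost of three technical lemmas. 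Both routes arrive at the same guess size and the same verification cost.
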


\noindent
This improves a well-known upper bound
$
    \sat \in \ntime(n(\log n)^c)
$
for some constant $c\ge 1$ that is not explicitly stated in the literature, obtained either by a direct argument~\cite{Schnorr1978:satisfiability} or by using a random-access machine to perform the obvious guess-and-check algorithm in linear time, and using a standard simulation of RAMs by Turing machines~\cite{Santhanam2001:lower}.
Theorem~\ref{theorem:sat} allows us to take $c=1$.

\section{Relationship to Prior Work}

Some speedups are known to be impossible.
This is the case for nondeterministic computations, which cannot be sped up polynomially by using a moderate amount of advice.
In particular, Fortnow, Santhanam, and Trevisan showed that $\NPTIME \not\subseteq \ntime(n^c)/(\log n)^{1/2c}$ for all $c$~\cite{Fortnow2005:hierarchies}, and this result was extended by Fortnow and Santhanam to polynomial speedup and advice~\cite{Fortnow:2016}.

Our work fits into the long tradition of conditional separations and containments of complexity classes.
Previous work typically exploits classes that make nonuniform use of circuits.
This includes
Impagliazzo and Wigderson's result that if $\cc{E}$ requires circuits of exponential size for infinitely many input sizes, then $\cc{BPP} = \PTIME$~\cite{Impagliazzo1997:requires},
Fortnow and Santhanam's strengthening of the nondeterministic time hierarchy theorem in the presence of advice~\cite{Fortnow:2016},
and the unconditional separation of Williams of $\cc{ACC}$ circuits of polynomial size and $\cc{NEXP}$ which was achieved by proving two conditional containments and then combining them to yield a contradiction~\cite{Williams2014:nonuniform}.
In contrast, we focus here on subclasses of classical nondeterministic time classes which are defined by bounding the amount of nondeterminism.
Our Theorem~\ref{theorem:p:nlin} is also a significant extension of a result sketched by Bloch, Buss, and Goldsmith, weakening the effective guessing hypothesis used in their work from logarithmic to polylogarithmic nondeterminism~\cite{Bloch1994:how}.

Our work further relates to questions raised by the classical result that $\dtime(n) \subsetneq \ntime(n)$ of Paul et al.~\cite{Paul1983:determinism}.
This was obtained by assuming that $\dtime(n)$ and $\ntime(n)$ coincide, and then trading off an increase in alternations to obtain a speedup which contradicts a hierarchy theorem.
Beginning with this strict containment, it is then natural to consider whether $\dtime(t(n)) \subseteq \ntime(n)$ for any superlinear time-constructible function $t(n)$.
Our Theorem~\ref{theorem:sat:nlin} demonstrates that a positive answer to this question for even a mildly superlinear function such as $t(n) = n \log n$ would lead to a breakthrough for \sat.
Furthermore, from a form of the nondeterministic time hierarchy theorem~\cite[Corollary 2.3]{Zak:1983}, the strict complexity class containment $\dtime(n \log n) \subsetneq \ntime(n \log n)$ would straightforwardly follow.
Although Paul et al.~\cite{Paul1983:determinism} showed that the strict containment $\dtime(t(n)) \subsetneq \ntime(t(n))$ holds for $t(n) = n$, and Santhanam extended this to any $t(n) = o(n\lg\starred n)$~\cite[Theorem 2.5]{Santhanam:2000}, such a result is not known for functions $t(n)$ that grow at least as fast as $n\lg\starred n$.
(The iterated logarithm $\lg^{*} x$ is the minimal height of a tower of $2$s representing a number at least as large as $x$.)
Note that we do not use the alternation-trading technique from~\cite{Paul1983:determinism} in our work.

\section{Overview of Paper: Intuitions Behind Our\\ Arguments}

In Section~\ref{sec:prelim}, we first define time-witness classes as a technically convenient method of dealing with computations that limit the amount of nondeterminism.
These classes have similarities with advice classes, and essentially treat the guess as part of the input.
Our use of the existential projection allows for straightforward accounting of the nondeterministic bits when composing simulations.

In Section~\ref{sec:egh:strong}, we prove Theorem~\ref{theorem:p:nlin}.
This requires some machinery to precisely relate the speedup in each simulation step to the increase in witness size.
Our arguments in this section rely on subpolynomial functions being closed under composition, and a Strong Speedup Lemma (Lemma~\ref{lemma:speedup:strong}) for exact witness size bounds.

In Section~\ref{sec:egh:weak}, we prove Theorem~\ref{theorem:sat:nlin}.
The key is the Weak Speedup Lemma (Lemma~\ref{lemma:speedup:weak}) which allows witness size bounds up to arbitrary constant factors.
This lemma allows us to transfer an effective guessing hypothesis from deterministic to nondeterministic computations.
We provide an upper bound on the number of distinct variables that can be contained in an $n$-bit SAT instance when using a reasonable encoding.
This leads to a more precise upper bound for the time taken to decide \sat on a nondeterministic Turing machine (Theorem~\ref{theorem:sat}).
We will also need to make more precise the classical time upper bounds for sorting on a deterministic Turing machine.

Theorem~\ref{theorem:p:nlin} relies on the Strong Speedup Lemma (Lemma~\ref{lemma:speedup:strong}) and Theorem~\ref{theorem:sat:nlin} relies on the Weak Speedup Lemma (Lemma~\ref{lemma:speedup:weak}).
These lemmas are closely related but use incomparable hypotheses so require separate proofs.
Both speedup lemmas use the same intuition: if we assume some form of effective guessing hypothesis, then we can apply that hypothesis to speed up the deterministic verification step of a guess-and-check computation.
Our proofs make this intuition precise by using the time-witness class definitions to ensure that the increase in witness size is appropriately bounded.

In Section~\ref{sec:misc}, we discuss some final thoughts and outline directions for further work.

\section{Preliminaries}
\label{sec:prelim}

With $\N$ we mean the set $\{0,1,2,\dots\}$.
We assume a fixed alphabet $\Sigma=\{0,1\}$ throughout.
We also use the notation $\lg x = \log_2 x$ throughout.
For $x, y \in \Sigma\starred$, the expression $\tuple{x,y}$ simply denotes the bits of $x$ followed by those of $y$, also known as concatenation.
This guarantees associativity: $\tuple{x,\tuple{y,z}} = \tuple{\tuple{x,y},z}$.
For a word $x \in \Sigma\starred$, we denote by $\size{x}$ the number of symbols in $x$, which may be $0$ if $x$ is the empty word.
Hence $\size{\tuple{x,y}} = \size{x} + \size{y}$.
In a slight abuse of notation, when $f \colon \N \to \N$ is a function we will often write $f(n)$ to emphasize this fact, rather than just $f$, and when $c$ is a constant, we will sometimes use $c$ to denote the constant function $c(n) = c$.
With SAT we mean the Boolean Satisfiability problem for Boolean formulas in conjunctive normal form (CNF).

Our exposition is based on the concept of existential projection.
We will use existential projections to define classes of languages with combined time and witness size bounds.
This approach simplifies the bookkeeping required to track witness sizes in composed simulations, yet these classes are closely related to more familiar complexity classes.
We first define the notion and then discuss some consequences and an example.

\begin{definition}
\label{def:ep}
Given a language $L \in \Sigma^{*}$ and a function $w \colon \N \to \N$, the \emph{existential projection} of $L$ by $w$ is the language
\[
  L[w(n)] = \{ x \mid x \in \Sigma\starred, \, \exists y \in \Sigma^{w(\size{x})} \, \tuple{x,y} \in L \}.
\]
Further, for functions $f(n)$ and $g(n)$ let $L[f(n),g(n)] = (L[f(n)])[g(n)]$ as a convenient notation for composition of existential projections.
\end{definition}

The witness size function $w$ represents some portion of each word in the language which is set aside to record choices made by a nondeterministic computation.
The existential projection then removes this portion of each word.

\zap{
We now establish some simple consequences of Definition~\ref{def:ep}.
First we show that constant-sized existential projections behave reasonably.

\begin{lemma}
\label{lemma:project:constant}
For any language $L \in \Sigma\starred$, any function $w \colon \N \to \N$, and any constant $c \in \N$, if $w(n) \ge c$ for all $n \in \N$, then $L[w(n)] = L[c,w(n)-c] = L[w(n)-c,c]$.
\end{lemma}

\begin{proof}
We first show that $L[w(n)] \subseteq L[c,w(n)-c]$.
Let $x \in L[w(n)]$.
Then there is some $z \in \Sigma^{w(n)}$ such that $\tuple{x,z} \in L$.
Now let $y$ be the prefix of $z$ of length $w(n) - c \ge 0$.
Hence $\tuple{x,y} \in L[c]$, and then $x \in (L[c])[w(n)-c] = L[c,w(n)-c]$.

We now show that $L[c,w(n)-c] \subseteq L[w(n)-c,c]$.
Let $x \in L[c,w(n)-c] = (L[c])[w(n)-c]$.
As $w(n) - c \ge 0$, there is some $y \in \Sigma^{w(n)-c}$ such that $\tuple{x,y} \in L[c]$.
Hence there is some $z \in \Sigma^{c}$ such that $\tuple{x,\tuple{y,z}} = \tuple{\tuple{x,y},z} \in L$.
Note that $\tuple{y,z}$ contains $w(n)$ symbols.
Let $u \in \Sigma^{c}$ and $v \in \Sigma^{w(n)-c}$ such that $\tuple{y,z} = \tuple{u,v}$.
It follows that $\tuple{x,\tuple{u,v}} \in L$, and therefore $\tuple{x,u} \in L[w(n)-c]$ and so $x \in {(L[w(n)-c])[c]} = L[w(n)-c,c]$.

For the final inclusion $L[w(n)-c,c] \subseteq L[w(n)]$, let $x \in L[w(n)-c,c] = (L[w(n)-c])[c]$.
Then there is some $y \in \Sigma^{c}$ such that $\tuple{x,y} \in L[w(n)-c]$.
Further, there is then some $z \in \Sigma^{w(n)-c}$ such that $\tuple{x,\tuple{y,z}} = \tuple{\tuple{x,y},z} \in L$.
Now $\tuple{y,z}$ contains $w(n)$ symbols, so $x \in L[w(n)]$.
 %
\end{proof}
}

We now make precise how Definition~\ref{def:ep} affects witness size changes in composition of existential projections.


\begin{lemma}
\label{lemma:project:composed}
For any language $L \in \Sigma\starred$, the identity
\[
  L[f(n),g(n)] = L[g(n) + f(n + g(n))]
\]
holds for all functions $f,g \colon \N \to \N$.
\end{lemma}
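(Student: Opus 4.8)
The plan is to unfold both sides of the claimed identity directly from Definition~\ref{def:ep} and then match up the witnesses using associativity of concatenation together with additivity of length. Write $n = \size{x}$. The left-hand side $L[f(n),g(n)] = (L[f(n)])[g(n)]$ consists of those $x$ for which there is some $z \in \Sigma^{g(n)}$ with $\tuple{x,z} \in L[f(n)]$; and $\tuple{x,z} \in L[f(n)]$ means there is some $y \in \Sigma^{f(\size{\tuple{x,z}})}$ with $\tuple{\tuple{x,z},y} \in L$. Since $\size{\tuple{x,z}} = \size{x} + \size{z} = n + g(n)$, this $y$ ranges over $\Sigma^{f(n+g(n))}$. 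The right-hand side $L[g(n)+f(n+g(n))]$ consists of those $x$ for which there is some $w \in \Sigma^{g(n)+f(n+g(n))}$ with $\tuple{x,w} \in L$. So after unfolding, proving the identity amounts to showing that a pair of witnesses $(z,y)$ of the stated lengths can be merged into a single witness $w$ of the stated length, and conversely.

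For the inclusion from left to right, I would take $x$ together with witnesses $z \in \Sigma^{g(n)}$ and $y \in \Sigma^{f(n+g(n))}$ as above and set $w = \tuple{z,y}$. Then $\size{w} = g(n) + f(n+g(n))$, and by associativity $\tuple{x,w} = \tuple{x,\tuple{z,y}} = \tuple{\tuple{x,z},y} \in L$, so $x$ lies in the right-hand side. For the reverse inclusion, given $x$ and $w \in \Sigma^{g(n)+f(n+g(n))}$ with $\tuple{x,w} \in L$, I would decompose $w$ as $\tuple{z,y}$ where $z$ is its length-$g(n)$ prefix and $y$ its length-$f(n+g(n))$ suffix; this decomposition exists because the two lengths are nonnegative and sum to $\size{w}$. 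Then $\tuple{\tuple{x,z},y} = \tuple{x,w} \in L$, and since $\size{\tuple{x,z}} = n + g(n)$ the word $y$ has exactly the length $f(\size{\tuple{x,z}})$ required to witness $\tuple{x,z} \in L[f(n)]$, hence $x \in (L[f(n)])[g(n)]$.

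The only real point of care, and the step most prone to an off-by-one-style slip, is correctly identifying the argument at which $f$ is evaluated: in the composed projection the inner projection by $f$ acts on words of $L[f(n)]$, and these have length $n + g(n)$ rather than $n$, which is precisely why $f(n+g(n))$ rather than $f(n)$ appears in the merged bound. Everything else is routine bookkeeping with the identities $\size{\tuple{x,y}} = \size{x} + \size{y}$ and $\tuple{x,\tuple{y,z}} = \tuple{\tuple{x,y},z}$ recorded in the preliminaries.
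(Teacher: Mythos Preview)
Your proof is correct and follows essentially the same approach as the paper: both argue the two inclusions by unfolding Definition~\ref{def:ep}, using associativity of concatenation and additivity of length, and for the reverse direction splitting the single witness into a length-$g(n)$ prefix and a length-$f(n+g(n))$ suffix. The only cosmetic difference is that the paper names the $g(n)$-length witness $y$ and the $f(n+g(n))$-length witness $z$, whereas you swap these roles.
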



\begin{proof}
 %
Suppose $x \in L[f(n),g(n)] = (L[f(n)])[g(n)]$.
Then there is some $y \in \Sigma\starred$ such that $\tuple{x,y} \in L[f(n)]$ and $\size{y} = g(\size{x})$.
Further, there is some $z \in \Sigma\starred$ such that $\tuple{x,\tuple{y,z}} = \tuple{\tuple{x,y},z} \in L$ such that $\size{z} = f(\size{\tuple{x,y}}) = f(\size{x}+\size{y}) = f(\size{x} + g(\size{x}))$.
Hence $\size{\tuple{y,z}} = \size{y} + \size{z} = g(\size{x}) + f(\size{x} + g(\size{x}))$ and so $x \in L[g(n) + f(n+g(n))]$.

For the converse, suppose $x \in L[g(n) + f(n+g(n))]$.
Hence there is some $\alpha \in \Sigma^{*}$ such that $\tuple{x,\alpha} \in L$ and $\size{\alpha} = g(\size{x}) + f(\size{x} + g(\size{x}))$.
Let $y$ be the prefix of $\alpha$ consisting of the first $g(\size{x})$ symbols, and $z$ be the remaining $f(\size{x} + g(\size{x}))$ symbols.
Then $\tuple{x,y} \in L[f(n)]$ and hence $x \in (L[f(n)])[g(n)] = L[f(n),g(n)]$.
This completes our proof.
\end{proof}

\begin{example}
From Lemma~\ref{lemma:project:composed} it follows that $L[n,n] = L[3n]$.
This is most easily illustrated via a figure, showing how a word in $L$ changes as we project out the witness.

\begin{figure}[ht]
\centering
\begin{tikzpicture}[inner sep=1pt]
\draw (0,2) -- (0.98,2);
\node[below] at (0.5,2-0.02) {$n$};
\draw (1.02,2) -- (1.98,2);
\node[below] at (1.5,2-0.02) {$n$};
\draw (2.02,2) -- (3.98,2);
\node[below] at (3,2-0.02) {$2n$};
\node[left] at (0-0.5,2) {$L$};
\draw (0,1) -- (0.98,1);
\node[below] at (0.5,1-0.02) {$n$};
\draw (1.02,1) -- (1.98,1);
\node[below] at (1.5,1-0.02) {$n$};
\node[left] at (0-0.5,1) {$L[n]$};
\draw (0,0) -- (0.98,0);
\node[below] at (0.5,-0.02) {$n$};
\node[left] at (0-0.5,0) {$L[n,n]$};
\end{tikzpicture}
\qquad
\begin{tikzpicture}[inner sep=1pt]
\draw (0,2) -- (0.98,2);
\node[below] at (0.5,2-0.02) {$n$};
\draw (1.02,2) -- (1.98,2);
\node[below] at (1.5,2-0.02) {$n$};
\draw (2.02,2) -- (3.98,2);
\node[below] at (3,2-0.02) {$2n$};
\node[right] at (4+0.5,2) {$L$};
\draw (0,0) -- (0.98,0);
\node[below] at (0.5,-0.02) {$n$};
\node[right] at (4+0.5,0) {$L[3n]$};
\end{tikzpicture}
\end{figure}
\end{example}


We now define time-witness classes in terms of existential projections.

\begin{definition}
\label{def:tiwi}
 %
\(
 \tiwi(t(n),w(n)) = \{ L[w(n)] \mid L \in \dtime(t(n)) \}.
\)
 %
\end{definition}

\begin{note}
Our definition of $\tiwi$ has similarities with the classical definition of advice classes~\cite{Karp1982:Turing}.
In the advice setting, the witness values are uniquely determined by a possibly uncomputable oracle function.
In contrast the witness values are not determined with our notion of existential projection, and we therefore avoid undecidable languages in our $\tiwi$ classes.
\end{note}

We need to be careful to account for the total nondeterminism when composing two or more nondeterministic simulations; such compositions are crucial for the proofs of our results.
We have chosen to define time-witness classes via the existential projection because this notation assists in explicitly keeping track of witness size scaling in compositions.
Existential projections also allow us to control the overhead of encoding.
Such detailed bookkeeping becomes necessary when the number of compositions in an argument is allowed to grow with the instance size.

For a function $w \colon \N \to \N$, let
 \[
  \tiwi(t(n),O(w(n))) = \bigcup_{c>0} \tiwi(t(n),cw(n)).
 \]
As usual, $\polylog(n)$ denotes the class of functions
\[
  \bigcup_{c>0} \{ f(n) \colon \N \to \N \mid f(n) = O((\lg n)^c) \}.
\]
As a notational convenience, if $\phi(n)$ is a logical expression in which the variable $n$ occurs free and there are no other free variables, then we say that $\phi(n)$
holds \emph{eventually} if there exists some $n_0 \in \N$ such that $\phi(n)$ is a true sentence for all $n \in \N$ such that $n \ge n_0$.

The following lemma will simplify several proofs.
This result allows us to ignore minor differences in the witness size functions when comparing two time-witness classes, and instead to focus on how they behave asymptotically.

\begin{lemma}
\label{lemma:tiwi:increase}
Suppose $t \colon \N \to \N$ is a function such that $t(n) = \Omega(n)$.
If $v(n) \le w(n)$ eventually, then $\tiwi(t(n),v(n)) \subseteq \tiwi(t(n),w(n))$.
 %
\end{lemma}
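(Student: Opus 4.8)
The plan is to show that every language realized as an existential projection by the smaller witness function $v$ of a $\dtime(t(n))$ language can also be realized as an existential projection by the larger function $w$ of a (possibly different) $\dtime(t(n))$ language, obtained by padding each short witness out to the length prescribed by $w$. First I would fix an arbitrary $M \in \tiwi(t(n),v(n))$ and, by Definition~\ref{def:tiwi}, choose $L \in \dtime(t(n))$ with $M = L[v(n)]$; let $m_L$ be a deterministic machine deciding $L$ in $O(t(n))$ steps. Since $v(n) \le w(n)$ eventually, fix $n_0$ with $v(n) \le w(n)$ for all $n \ge n_0$. I would then define a padded language $L'$ as follows: a string $u = \tuple{x,y}$ with $\size{x} \ge n_0$ and $\size{y} = w(\size{x})$ is placed in $L'$ exactly when $\tuple{x,z} \in L$, where $z$ is the prefix of $y$ of length $v(\size{x})$ (well defined since $v(\size{x}) \le w(\size{x})$); for the finitely many short cases $\size{x} < n_0$ I would instead place into $L'$ all strings $\tuple{x,y}$ with $\size{y} = w(\size{x})$ whenever $x \in M$, and no others; all remaining strings lie outside $L'$.

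Next I would verify the two required facts. For membership in $\dtime(t(n))$: on input $u$ a machine recovers the intended split $u = \tuple{x,y}$ from $\size{u} = \size{x} + w(\size{x})$ (reverting to the finite table for the short cases), extracts $x$ and the length-$v(\size{x})$ prefix $z$ of $y$, and runs $m_L$ on $\tuple{x,z}$; since $\size{\tuple{x,z}} = \size{x} + v(\size{x}) \le \size{x} + w(\size{x}) = \size{u}$ and $t(n) = \Omega(n)$ (treating $t$ as nondecreasing, as is standard), the whole computation takes $O(t(\size{u}))$ steps, so $L' \in \dtime(t(n))$. For the identity $L'[w(n)] = M$: when $\size{x} \ge n_0$, a string $\tuple{x,y}$ with $\size{y}=w(\size{x})$ lies in $L'$ iff its length-$v(\size{x})$ prefix $z$ satisfies $\tuple{x,z} \in L$, and since every $z \in \Sigma^{v(\size{x})}$ extends to such a $y$ and every such $y$ restricts to such a $z$, some $y$ works iff some $z$ works, i.e.\ iff $x \in L[v(n)] = M$; when $\size{x} < n_0$ the equivalence holds by the construction of the finite table. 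Hence $M = L'[w(n)] \in \tiwi(t(n),w(n))$, and as $M$ was arbitrary the inclusion follows.

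The main obstacle is purely bookkeeping around well-definedness: because $\tuple{\cdot,\cdot}$ is plain concatenation, the machine for $L'$ must recover the split of $u$ into $x$ and $y$ from $\size{u}$ alone, so one must argue that the relevant split is pinned down for all but finitely many input lengths — this is exactly where the eventual inequality $v(n) \le w(n)$ together with the finite correction for short inputs is used — and that this recovery, plus the simulation of $m_L$ on the marginally shorter string $\tuple{x,z}$, stays within $O(t(n))$, which is where the hypothesis $t(n) = \Omega(n)$ is needed. Everything else is a routine unwinding of Definitions~\ref{def:ep} and~\ref{def:tiwi}.
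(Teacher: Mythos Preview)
Your proof is correct and follows essentially the same approach as the paper's: both define a padded language $L'$ in which each $v(n)$-length witness is extended to length $w(n)$ by junk bits, argue that $L' \in \dtime(t(n))$ using $t(n)=\Omega(n)$ to absorb the linear overhead, and handle the finitely many inputs below the threshold $n_0$ separately. The only cosmetic difference is that you bake the finite correction directly into the definition of $L'$ via a lookup table, whereas the paper first shows $L'[w(n)]$ agrees with $K$ for all large inputs and then patches the decider by brute force on short inputs; your version is arguably tidier in this respect.
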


\begin{proof}
Suppose $K \in \tiwi(t(n),v(n))$.
Hence there is some $L \in \dtime(t(n))$ such that $K = L[v(n)]$.
Let $L' = \{ \tuple{\tuple{x,y},z} \mid \tuple{x,y} \in L, \size{z} + \size{y} = w(\size{x}) \}$.
Note that $L'$ does not include any word $\tuple{\tuple{x,y},z}$ for which $v(\size{x}) > w(\size{x})$.
Since $v(n) \le w(n)$ eventually, the language $L'$ does include such words for all sufficiently large $\size{x}$, and therefore $K$ and $L'[w(n)]$ only differ in at most finitely many words up to some threshold size $n_0$, and will be the same for all inputs of size $n_0$ and greater.
Now we can decide $L'$ by using a decider for $L$ while ignoring any additional part of the input.
The overhead of setting up the decider for $L$ is linear, and because $t(n) = \Omega(n)$ we have that $L' \in \dtime(t(n))$.
Hence $L'[w(n)] \in \tiwi(t(n),w(n))$.

Now we can further augment the decider for $L'[w(n)]$ with a brute force simulation which deterministically checks all possible witnesses for an input $x \in K$ if $\size{x}$ is below the threshold size $n_0$ where $v(n) \le w(n)$ for all $n \ge n_0$.
This introduces a large constant factor into the simulation, but this is taken care of by the time bound being $O(t(n))$.
Therefore $K \in \tiwi(t(n),w(n))$.
 %
\end{proof}

For functions $w,t \colon \N \to \N$, we say that $w(n)$ is computable in $t(n)$ time if there is a deterministic Turing machine that when given $x \in \Sigma\starred$, in at most $t(\size{x})$ steps writes a word on its output tape with precisely $w(\size{x})$ symbols.
For simplicity, we assume that all witness size functions in the following are computable within the provided time bounds.
By first computing the witness size function for the given input, a Turing machine can determine where a word finishes and the witness bits begin, thereby avoiding overhead for self-terminating encodings or separator characters in the alphabet.

$\ntigu(t(n),w(n))$ is the class of languages decidable by a
multitape Turing machine that takes at most $O(t(n))$ steps and uses at
most $w(n)$ nondeterministic bits on any input of length $n$ (for
instance, see~\cite{Fortnow:2016}).
The following results explain our choice of Definition~\ref{def:tiwi}, by relating the time-witness $\tiwi$ classes to the more familiar $\ntime$ (Lemma~\ref{lemma:tiwi:ntime}) and time-guess $\ntigu$ (Lemma~\ref{lemma:tiwi:ntigu}) classes.

First we consider the case where witness size is quite large, even possibly dominating the input size.


\begin{lemma}
\label{lemma:tiwi:ntime}
If $w(n) = \Omega(n)$, then $\tiwi(n,O(w(n))) = \ntime(w(n))$.
\end{lemma}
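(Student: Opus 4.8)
The plan is to prove the two inclusions separately. For the inclusion $\tiwi(n,O(w(n))) \subseteq \ntime(w(n))$, suppose $K \in \tiwi(n, cw(n))$ for some constant $c$, so $K = L[cw(n)]$ for some $L \in \dtime(n)$. Given an input $x$ of length $n$, a nondeterministic machine first computes $cw(n)$ (which we assume is computable within the time bound), then nondeterministically guesses a string $y$ of length $cw(n)$, writing it on a work tape, and finally runs the deterministic decider for $L$ on $\tuple{x,y}$. The total length of $\tuple{x,y}$ is $n + cw(n)$; since $w(n) = \Omega(n)$ we have $n + cw(n) = O(w(n))$, so the deterministic simulation of $L$ runs in $O(w(n))$ steps. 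Guessing $y$ costs $cw(n) = O(w(n))$ steps, and computing the witness size function is within budget. Hence $K \in \ntime(w(n))$, and taking the union over $c$ gives the inclusion.

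For the reverse inclusion $\ntime(w(n)) \subseteq \tiwi(n, O(w(n)))$, suppose $K \in \ntime(w(n))$, decided by a nondeterministic multitape machine $M$ running in at most $dw(n)$ steps for some constant $d$. A machine running in $dw(n)$ steps makes at most $dw(n)$ nondeterministic choices, so we can encode an entire accepting computation's choices as a witness string $y$ of length exactly $w'(n) := dw(n)$ (padding out unused choices). Define $L = \{\tuple{x,y} : \size{y} = d\,w(\size{x}),\ M \text{ accepts } x \text{ along the choice sequence } y\}$. Then $K = L[dw(n)]$ by construction — $x \in K$ iff some choice sequence leads $M$ to accept, iff some $y$ of the right length has $\tuple{x,y} \in L$. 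It remains to check $L \in \dtime(n)$: a deterministic machine, given $\tuple{x,y}$ of length $m = n + dw(n)$, first verifies $\size{y} = d\,w(n)$ (parsing the input boundary using the computability of $w$), then simulates $M$ on $x$ step by step, reading the next choice bit from $y$ at each nondeterministic step. Since $M$ runs for $dw(n)$ steps and each step of $M$ can be simulated with constant overhead on a fixed multitape deterministic machine (no slowdown is needed because the choices are being read rather than searched), this takes $O(w(n))$ steps; and since $w(n) = \Omega(n)$, we have $O(w(n)) = O(n + dw(n)) = O(m)$, so $L \in \dtime(n)$ after applying the Linear Speedup Theorem to absorb the constant. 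Therefore $K \in \tiwi(n, dw(n)) \subseteq \tiwi(n, O(w(n)))$.

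The main subtlety — rather than a genuine obstacle — is the accounting in the second inclusion: we must ensure that once the nondeterministic choices are laid out explicitly as part of the input word, the resulting verification is genuinely \emph{linear} in the length of that extended word, not merely polynomial. This works precisely because $w(n) = \Omega(n)$ forces the witness length $dw(n)$ to dominate (up to a constant) the input length $n$, so the $O(w(n))$ running time of the verifier is linear in $m = \size{\tuple{x,y}}$. A secondary point is handling the input-boundary parsing: since all witness size functions are assumed computable within the relevant time bounds, the deterministic decider for $L$ can recompute $d\,w(\size{x})$ to locate where $x$ ends and $y$ begins without needing a separator symbol. Lemma~\ref{lemma:tiwi:increase} and the Linear Speedup Theorem then let us be cavalier about constant factors throughout.
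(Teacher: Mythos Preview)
Your proposal is correct and follows essentially the same approach as the paper: for each direction you guess-then-verify (respectively, record-choices-then-simulate) exactly as the paper does, using $w(n)=\Omega(n)$ to ensure that $n+O(w(n))=O(w(n))$ so that the deterministic verifier runs in time linear in the extended input length. The only cosmetic differences are that you explicitly invoke the Linear Speedup Theorem and mention Lemma~\ref{lemma:tiwi:increase} to handle constants, whereas the paper simply absorbs these into the $O(\cdot)$ in the definition of $\dtime$; neither of these is needed nor harmful.
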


\begin{proof}
First suppose $c > 0$ and let $K \in \tiwi(n,c\cdot w(n))$.
Then there is some $L \in \dtime(n)$ such that $K = L[c\cdot w(n)]$.
Let $M$ be a deterministic Turing machine which decides $L$ in $O(n)$ steps.
We define a nondeterministic Turing machine $M'$ that given an $n$-bit input $x$, writes a copy of $x$ followed by a string $y$ consisting of $c\cdot w(n)$ bits chosen nondeterministically, and simulates $M$ with input $\tuple{x,y}$.
Then $M'$ takes $O(n + c\cdot w(n)) + O(c\cdot w(n)) = O(w(n))$ steps to decide whether $\tuple{x,y} \in L$, and therefore $K \in \ntime(w(n))$.

Now suppose $K \in \ntime(w(n))$.
Then there is a nondeterministic Turing machine $M'$ which decides $K$ in $O(w(n))$ steps.
Given an $n$-bit input $x$, we then can construct a Turing machine $M$ which records $O(w(n))$ nondeterministic bits for the steps taken by $M'$, and verifies in linear time in the size of the guessed sequence of actions whether $M'$ accepts.
Consider the language $L$ consisting of strings $\tuple{x,y}$ such that $x\in K$ and $\size{y} = O(w(n))$, where $y$ records the moves made by an accepting computation of $M'$ on input $x$.
Then $L \in \dtime(n)$ and $K = L[O(w(n))]$.
Hence $K \in \tiwi(n,O(w(n)))$.
 %
\end{proof}

If $k \ge 3$ then any $k$-tape Turing machine can be simulated by a two-tape machine with a logarithmic increase in time~\cite{Hennie1966:two}.
This slowdown does not affect Lemma~\ref{lemma:tiwi:ntime} as we are not trying to reduce the number of tapes: in the proof each inclusion increases the number of tapes.
This increase does not matter for the purpose of establishing the result, or for the applications where we use it, although it might be important for other applications of the technique where the number of tapes has to be more carefully controlled.

Lemma~\ref{lemma:tiwi:ntime} showed that $\tiwi$ classes for superlinear witness bounds lose the discrimination power of the $\ntigu$ classes.
However, our next result shows that $\tiwi$ and $\ntigu$ classes coincide for at most linear witness size and many common time bounds.

\begin{lemma}
\label{lemma:tiwi:ntigu}
Suppose that $w(n) \le n$ for all $n$, and that there exist constants $c\ge 1$ and $d\ge 0$ such that $t(n) = \Theta(n^c(\lg n)^d)$.
Then
 \(
  \tiwi(t(n),w(n)) = \ntigu(t(n),w(n)).
 \)
\end{lemma}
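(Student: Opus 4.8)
The plan is to prove the two inclusions separately, using the fact that the time bound $t(n) = \Theta(n^c (\lg n)^d)$ is at least linear and is nicely behaved under the kind of manipulations we need (padding the input by a linear amount, plus a logarithmic-factor slowdown for tape reduction, keeps us in the same class up to the $\Theta$).

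For $\tiwi(t(n),w(n)) \subseteq \ntigu(t(n),w(n))$: suppose $K \in \tiwi(t(n),w(n))$, so $K = L[w(n)]$ for some $L \in \dtime(t(n))$ decided by a deterministic machine $M$ in $O(t(n))$ steps. I would build a nondeterministic machine $M'$ that, on input $x$ of length $n$, first computes $w(n)$ (which by our standing assumption is computable within the time bound), then writes down $w(n)$ nondeterministically chosen bits $y$ on a work tape, and then simulates $M$ on $\tuple{x,y}$. Since $w(n) \le n$, the combined input $\tuple{x,y}$ has length $O(n)$, so the simulation of $M$ takes $O(t(O(n))) = O(t(n))$ steps (here is where $t(n) = \Theta(n^c(\lg n)^d)$ is used: it is polynomially bounded and satisfies $t(O(n)) = O(t(n))$). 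The machine $M'$ uses exactly $w(n)$ nondeterministic bits, so $K \in \ntigu(t(n),w(n))$.

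For the converse $\ntigu(t(n),w(n)) \subseteq \tiwi(t(n),w(n))$: suppose $K$ is decided by a nondeterministic multitape machine $N$ running in $O(t(n))$ steps and using at most $w(n) \le n$ nondeterministic bits. I would define $L = \{\, \tuple{x,y} : y \in \Sigma^{w(\size{x})} \text{ and } N \text{ accepts } x \text{ along the branch whose nondeterministic choices are } y \,\}$. Clearly $K = L[w(n)]$. The remaining task is to check $L \in \dtime(t(n))$. A deterministic machine for $L$, given $\tuple{x,y}$, separates off $x$ and $y$ (it can compute $w(\size{x})$ to locate the boundary, so no separator is needed), then runs $N$ deterministically, consuming the bits of $y$ in order whenever $N$ makes a nondeterministic choice, and accepts iff this single branch of $N$ accepts. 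This is a deterministic simulation of one branch of a multitape nondeterministic machine with the same number of tapes, costing $O(t(n))$ steps plus $O(n)$ for the bookkeeping; since $t(n) = \Omega(n)$ this is $O(t(n))$. Hence $L \in \dtime(t(n))$ and $K \in \tiwi(t(n),w(n))$.

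The main obstacle — and the reason the hypotheses on $t$ and $w$ are stated the way they are — is the first inclusion's use of input padding: the witness bits are prepended or appended to the input, so $M$ runs on a string of length up to $n + w(n) \le 2n$, and we need $t(2n) = O(t(n))$ to absorb this. For $t(n) = \Theta(n^c(\lg n)^d)$ this is immediate, but it would fail for, say, exponential time bounds, and it is exactly why the lemma restricts to polynomially-bounded $t$ and to $w(n) \le n$. A secondary point to handle carefully is that simulating one branch of a $k$-tape nondeterministic machine deterministically (for the converse) does not change the number of tapes and incurs only constant-factor overhead per step; the logarithmic-factor tape-reduction cost mentioned after Lemma~\ref{lemma:tiwi:ntime} is not needed here, since neither direction attempts to reduce the tape count. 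One should also make explicit, as elsewhere in the paper, that $w(n)$ is computable within $O(t(n))$ steps so the word/witness boundary can be located without a separator symbol.
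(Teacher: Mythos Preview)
Your proposal is correct and follows essentially the same approach as the paper: both directions are proved by the same guess-and-simulate / read-witness-and-simulate constructions, and the key use of the hypotheses is exactly the one you identify, namely that $w(n)\le n$ together with $t(n)=\Theta(n^c(\lg n)^d)$ gives $t(n+w(n))=O(t(n))$ so the padded input does not blow up the time bound. The paper spells out the inequality chain for $t(n+w(n))\le t(2n)=O(t(n))$ a bit more explicitly and notes that each direction adds a tape, but otherwise your argument and the paper's coincide.
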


\begin{proof}
First let $K \in \tiwi(t(n),w(n))$.
Then there exists some language $L \in \dtime(t(n))$ such that $K = L[w(n)]$.
Suppose that $M$ is a deterministic Turing machine which decides $L$ in $O(t(n))$ steps.
We need to decide whether $x \in K$ using a nondeterministic machine $M'$.
$M'$ first copies $x$ to a tape (which will be the input tape of $M$), appends $w(\size{x})$ bits to this tape the values of which are determined nondeterministically, and then simulates $M$.
The simulation can be performed using a constant number of deterministic steps per step of $M$, so the total number of steps to decide whether $x \in K$ is at most $a\cdot t(\size{x}+w(\size{x}))+b\cdot w\size{x}$ for some constants $a,b$.
Moreover, $M'$ uses $w(\size{x})$ nondeterministic bits and accepts $x$ if, and only if, there is some $y \in \Sigma^{w(n)}$ such that $\tuple{x,y} \in L$.
This is equivalent to saying that $M'$ accepts $x$ iff $x \in K$, so $M'$ correctly decides $K$.

We now claim that if $T(n) \le a\cdot t(n+w(n))+b\cdot w(n)$ eventually for constants $a,b$, then the conditions guarantee that $T(n) = O(t(n))$.
Therefore $K \in \ntigu(t(n),w(n))$ by putting $T(n)$ as the largest number of steps taken by $M'$ to decide an input of $n$ bits.
To prove the claim, suppose there are constants $a,b$ such that $T(n) \le a\cdot t(n+w(n))+b\cdot w(n)$ eventually.
Since $t(n) = O(n^c(\lg n)^d)$, we have some constant $e$ so that $t(n) \le e\cdot n^c(\lg n)^d$ eventually.
Then $T(n) \le a\cdot t(n+w(n))+b\cdot w(n) \le a\cdot e(n+w(n))^c(\lg(n+w(n)))^d + b\cdot w(n) \le a\cdot e(2n)^c(1+\lg n)^d + b\cdot n \le a\cdot e\cdot (2n)^c(2\lg n)^d + b\cdot n = (a 2^c 2^d e)n^c(\lg n)^d + b\cdot n$ eventually.
As $c \ge 1$ and $d \ge 0$ it follows that $T(n) = O(n^c(\lg n)^d)$.
Now since $t(n) = \Omega(n^c(\lg n)^d)$, we have that $T(n) = O(t(n))$, and we are done.

For the other direction, suppose $K \in \ntigu(t(n),w(n))$.
There is then some nondeterministic Turing machine $M'$ which decides $K$ using $O(t(n))$ steps and with $w(n)$ bits of nondeterminism.
$M'$ accepts $x$ precisely when there is some $y \in \Sigma^{w(n)}$ representing the nondeterministic moves made by $M'$ in reaching an accepting state (within $O(t(n))$ steps).
We build a deterministic Turing machine $M$, which when given $x$ and the nondeterministic moves $y$ as input $\tuple{x,y}$, simulates $M'$ using a number of deterministic steps per step of $M'$ that is bounded by some constant $a$.
To achieve this we first compute $w(\size{x})$ in at most $O(t(\size{x})$ steps and store this on a unary tape, which we can then use to determine where $x$ ends and $y$ begins.
Hence $M$ on input $\tuple{x,y}$ takes at most $a\cdot t(\size{x}) + b\size{y}$ steps, so $O(t(\size{x}+\size{z}))$ steps.
We can now let language $L$ be the language of words $\tuple{x,y}$ accepted by $M$.
There is some $y\in \Sigma^{w(n)}$ such that machine $M$ accepts $\tuple{x,y}$ iff $M'$ accepts $x$.
Then $K = L[w(n)]$ and $L \in \dtime(t(n))$, so $K \in \tiwi(t(n),w(n))$.
 %
\end{proof}

In the proof of Lemma~\ref{lemma:tiwi:ntigu} each inclusion increases the number of tapes by one.
This does not affect our results but might restrict some applications.

While classes such as $\cc{NTIGU}(t(n),w(n))$ measure the time bound in terms of the input, $\tiwi(t(n),w(n))$ measures the time bound as a function of both the input and the witness.
For witnesses growing strictly faster than the size of the input, the two definitions can diverge where $\tiwi(t(n),O(w(n)))$ is not equal to $\ntigu(t(n),O(w(n)))$.
To see this, take $t(n) = n$ and $w(n) = n^2$.
We have that $\ntigu(n,O(n^2)) = \ntigu(n,O(n)) = \ntime(n)$ because additional guess bits beyond $O(t(n))$ do not help us.
However, $\ntime(n^2) = \tiwi(n,O(n^2))$ by Lemma~\ref{lemma:tiwi:ntime}.
Therefore, $\ntigu(n,O(n^2)) = \ntime(n) \subsetneq \ntime(n^2) = \tiwi(n,O(n^2))$ by the nondeterministic time hierarchy theorem.

Notice from the preceding discussion that as the witness size grows beyond the input size, the $\ntigu$ classes no longer capure new languages while the $\tiwi$ classes become equivalent to the coarser classical nondeterministic time classes.
Even though these two definitions can diverge when the witness size is larger than the input size, Lemma~\ref{lemma:tiwi:ntigu} allows us to use $\tiwi$ rather than $\ntigu$ in the subsequent discussion, because in this work we are interested in witnesses of moderate size, at most as large as the input size and computable within the given time bounds.
The choice of $\tiwi$ avoids technical difficulties arising from instance size blowup when composing simulations, and essentially amounts to preallocating the nondeterministic bits which are used in a computation and including them in the instance size.

\section{Strong Effective Guessing Would Imply Linear-Time Simulation of \texorpdfstring{$\PTIME$}{P}}
\label{sec:egh:strong}

In this section we show that a strong form of guessing would imply that all of $\PTIME$ is contained in $\ntime(n)$.
It appears to us that this is unlikely to be true.
Even though such an inclusion in turn would imply that $\PTIME \ne \NPTIME$, many other rather less likely consequences would also follow.
These include improving the current best $O(n^{2.37286})$ algorithm for multiplication of $n$ by $n$ matrices (see \cite{Alman2021:refined}) to $\tilde{O}(n^{2})$ time, reducing the time for general graph maximum matching from $\tilde{O}(n^{2.5})$ (see \cite{Vazirani2012:simplification}) to $\tilde{O}(n^{2})$, and reducing the $n^{k/c}$ time (for some $c$ such that $1 \le c < k$) to decide if an input graph contains a $k$-clique to $O(n\lg n)$ time, all achieved through the use of nondeterminism.
Yet it is not at all clear that allowing guessing could significantly speed up so many well-studied and disparate algorithms.
(Here we use the common convention that $\tilde{O}(t(n))$ denotes the class of functions $\bigcup_{c>0} O(t(n)(\lg t(n))^c)$.)

Informally, our argument for Theorem~\ref{theorem:p:nlin} works as follows.
We have defined the class $\tiwi(n,\lg n)$, which by Lemma~\ref{lemma:project:composed} can be regarded as the class of languages decided by nondeterministic machines that use linear time and $\lg n$ bits of nondeterminism.
We suppose that $\dtime(n^2) \subseteq \tiwi(n,\lg n)$.
By a padding argument it follows that $\dtime(n^4) \subseteq \tiwi(n^2,\lg n)$.
Now suppose $L \in \tiwi(n^2,\lg n)$; this means that there is a language $L' \in \dtime(n^2)$ such that $x \in L$ if there is a $y$ of length $\lg n$ and $xy \in L'$.
Again applying our hypothesis, this time to $L'$, we conclude via Lemmas~\ref{lemma:tiwi:increase} and \ref{lemma:tiwi:ntime} that $L \in \tiwi(n,\lg n)$.
We therefore conclude that $\dtime(n^4) \subseteq \tiwi(n,\lg n)$.
We can then use this step in an induction argument.
We now proceed with a formal version of this argument.

In preparation for our result, we need to demonstrate that time-witness classes are structurally well-behaved.
We first establish conditions which ensure that increases in witness size are kept reasonable when applying effective guessing.

\begin{lemma}
\label{lemma:speedup:strong:sub}
If $\dtime(t(n)) \subseteq \tiwi(t'(n),w(n))$ then
 \[
  \tiwi(t(n),w'(n)) \subseteq \tiwi(t'(n), w(n+w'(n)) + w'(n)).
 \]
\end{lemma}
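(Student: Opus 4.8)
The plan is to unfold the definitions of the time-witness classes and then appeal to the composition identity for existential projections from Lemma~\ref{lemma:project:composed}; no quantitative estimates are needed beyond what is already packaged into those statements.

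First I would fix an arbitrary $K \in \tiwi(t(n),w'(n))$. By Definition~\ref{def:tiwi} there is a language $L \in \dtime(t(n))$ with $K = L[w'(n)]$. Applying the hypothesis $\dtime(t(n)) \subseteq \tiwi(t'(n),w(n))$ to $L$ yields a language $M \in \dtime(t'(n))$ with $L = M[w(n)]$. Substituting gives $K = (M[w(n)])[w'(n)]$, which by the composition notation of Definition~\ref{def:ep} is precisely $M[w(n),w'(n)]$. Instantiating Lemma~\ref{lemma:project:composed} with $f(n) = w(n)$ and $g(n) = w'(n)$ rewrites this as $M[\,w'(n) + w(n + w'(n))\,]$. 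Since $M \in \dtime(t'(n))$, Definition~\ref{def:tiwi} gives $K \in \tiwi(t'(n),\, w(n + w'(n)) + w'(n))$, which is the desired inclusion.

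The only subtlety — and essentially the sole point at which anything could go wrong — is the order in which the two existential projections are composed. The hypothesis is applied to the \emph{inner} language $L$, so in the composite $M[\,\cdot\,,\,\cdot\,]$ the witness function $w$ coming from the hypothesis must occupy the first (inner) slot and the original witness $w'$ the second (outer) slot; Lemma~\ref{lemma:project:composed} must then be instantiated with $f = w$ and $g = w'$, not the other way around. Once the slots are assigned correctly the argument is a direct substitution: closure of $\dtime$ under the relevant simulations and the linear encoding overhead are already absorbed into the hypothesis and into Definition~\ref{def:tiwi}, so nothing further has to be estimated here.
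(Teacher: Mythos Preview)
Your proof is correct and essentially identical to the paper's own argument: unfold the definition of $\tiwi$, apply the hypothesis to the inner language, and then invoke Lemma~\ref{lemma:project:composed} with $f=w$ and $g=w'$ to collapse the two projections. The only difference is the choice of variable names; your added remark about the slot order is accurate and indeed the one point where care is needed.
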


\begin{proof}
Suppose that $\dtime(t(n)) \subseteq \tiwi(t'(n),w(n))$, and let $J$ be an arbitrary language in $\tiwi(t(n),w'(n))$.
By definition then there exists some language $K$ in $\dtime(t(n))$ such that $J = \proj{K}{w'(n)}$.
By our assumption there must exist some $L \in \dtime(t'(n))$ such that $K = \proj{L}{w(n)}$.
By Lemma~\ref{lemma:project:composed}, we have that $J = \proj{K}{w'(n)} = \proj{L}{w(n),w'(n)} = \proj{L}{w(n + w'(n)) + w'(n)}$.
Finally, we can conclude that $J \in \tiwi(t'(n), w(n+w'(n)) + w'(n))$.
\end{proof}

\begin{lemma}[Strong Speedup]
\label{lemma:speedup:strong}
Suppose $\dtime(t(n)) \subseteq \tiwi(t'(n),w(n))$.
For all functions $w' \colon \N \to \N$ for which there exists a constant $C$ such that ${w(n+w'(n))}$ $\le C\cdot w(n)$ eventually, we have that
 \[
  \tiwi(t(n),w'(n)) \subseteq \tiwi(t'(n), C \cdot w(n) + w'(n)).
 \]
\end{lemma}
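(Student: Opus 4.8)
The plan is to obtain this as a short corollary of the preparatory Lemma~\ref{lemma:speedup:strong:sub} combined with the monotonicity result Lemma~\ref{lemma:tiwi:increase}, with no new machinery required. First I would apply Lemma~\ref{lemma:speedup:strong:sub} verbatim, with the same functions $t(n),t'(n),w(n),w'(n)$: the hypothesis $\dtime(t(n)) \subseteq \tiwi(t'(n),w(n))$ of the present lemma is exactly the hypothesis that lemma needs, so it immediately yields
\[
  \tiwi(t(n),w'(n)) \subseteq \tiwi(t'(n), w(n+w'(n)) + w'(n)).
\]

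Next I would use the quantitative hypothesis of the lemma, namely that there is a constant $C$ with $w(n+w'(n)) \le C\cdot w(n)$ eventually, to bound the witness size function appearing on the right: adding $w'(n)$ to both sides gives $w(n+w'(n)) + w'(n) \le C\cdot w(n) + w'(n)$ eventually. To upgrade this eventual pointwise inequality between witness size functions into an inclusion of the corresponding time-witness classes over the common time bound $t'(n)$, I would invoke Lemma~\ref{lemma:tiwi:increase} (taking its role of $v(n)$ to be $w(n+w'(n)) + w'(n)$ and its $w(n)$ to be $C\cdot w(n) + w'(n)$), obtaining
\[
  \tiwi(t'(n), w(n+w'(n)) + w'(n)) \subseteq \tiwi(t'(n), C\cdot w(n) + w'(n)).
\]
Chaining the two displayed inclusions then gives $\tiwi(t(n),w'(n)) \subseteq \tiwi(t'(n), C\cdot w(n) + w'(n))$, which is the claim.

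I do not expect a genuine obstacle here; the only points to state carefully are bookkeeping ones. Lemma~\ref{lemma:tiwi:increase} requires the shared time bound to satisfy $t'(n)=\Omega(n)$, so I would record this as a standing assumption (it is implicit throughout, since $\dtime(t'(n))$ and the computability-of-witness-size conventions are only meaningful for at least linear time bounds). One should also note that the qualifier \emph{eventually} is preserved through the two steps used — a single addition of $w'(n)$, and composition with a hypothesis that itself holds only eventually — which is routine since the intersection of two cofinite sets of input lengths is cofinite. It is worth remarking that the hypothesis $w(n+w'(n)) \le C\cdot w(n)$ is precisely what keeps the witness blow-up under control when $w'$ grows faster than $w$; in the intended application $w$ and $w'$ are polylogarithmic, so $w(n+w'(n)) = w(n+\polylog(n)) \le w(2n)$ eventually and a constant $C$ exists.
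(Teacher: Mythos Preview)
Your proposal is correct and follows essentially the same route as the paper's own proof: apply Lemma~\ref{lemma:speedup:strong:sub} to obtain the intermediate inclusion with witness bound $w(n+w'(n))+w'(n)$, then invoke Lemma~\ref{lemma:tiwi:increase} together with the eventual bound $w(n+w'(n))\le C\cdot w(n)$ to absorb this into $C\cdot w(n)+w'(n)$. Your explicit remark that Lemma~\ref{lemma:tiwi:increase} requires $t'(n)=\Omega(n)$ is a small piece of bookkeeping the paper leaves implicit, but otherwise the arguments coincide.
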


\begin{proof}
Suppose first that the inclusion $\dtime(t(n)) \subseteq \tiwi(t'(n),w(n))$ holds, and let $K$ be a language in $\tiwi(t(n),w'(n))$.
Via Lemma~\ref{lemma:speedup:strong:sub} we can now conclude that
 \(
  K \in \tiwi(t'(n), w(n+w'(n)) + w'(n)).
 \)
From the properties of $w'$ and Lemma~\ref{lemma:tiwi:increase} it then follows that $K \in \tiwi(t'(n), C \cdot w(n) + w'(n))$.
\end{proof}

We continue with a useful amplification property of time-witness classes in the presence of effective guessing.
By analogy with superadditive functions (see~\cite{Book1970:time}), we say that a function $f \colon \N \to \N$ is \emph{weakly superadditive} if
$f(n+d) \ge f(n) + d$ for all $d,n \in \N$.
Note that any function $f(n) = n^c$, where $c\ge 1$, is weakly superadditive.

\begin{lemma}
\label{lemma:reverse:padding}
Let $f$ be a weakly superadditive function.
If
 \[
  \dtime(t(n)) \subseteq \tiwi(t'(n),w(n))
 \]
then
 \[
  \dtime(t(f(n))) \subseteq \tiwi(t'(f(n)),w(f(n))).
 \]
\end{lemma}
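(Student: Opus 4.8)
The plan is to run a padding argument ``in reverse'': starting from an arbitrary language $A \in \dtime(t(f(n)))$, I pad its instances up to length $f(n)$ so that the padded language lies in $\dtime(t(n))$, apply the hypothesis to the padded language, and then unpad, keeping careful track of how the time and witness bounds transform. Before anything else I would record the two elementary consequences of weak superadditivity that the argument uses: taking $d = n$ together with $f(0) \ge 0$ gives $f(n) \ge n$ for all $n$, so the padding length $f(n) - n$ is well defined; and $f(n+1) \ge f(n) + 1$ shows $f$ is strictly increasing, so a string's ``true length'' is recoverable from its padded length.

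Concretely, I would fix a deterministic $O(t(f(n)))$-step decider $M$ for $A$ and define
\[
  \hat A = \{ \tuple{x, 1^{f(\size{x}) - \size{x}}} \mid x \in A \}.
\]
To see $\hat A \in \dtime(t(n))$: on an input of length $N$, a machine (under the standing constructibility assumptions on $f$) recovers the unique $n \le N$ with $f(n) = N$, rejects if none exists, checks that the last $N - n$ symbols are $1$s, and otherwise simulates $M$ on the first $n$ symbols; this costs $O(t(f(n))) = O(t(N))$ steps, the overhead being absorbed since $t$ is at least linear. Applying the hypothesis gives $\hat A \in \tiwi(t'(n), w(n))$, so there is some $L \in \dtime(t'(n))$ with $\hat A = L[w(n)]$; note that an instance of $\hat A$ with true part $x$ has padded length $f(\size{x})$ and hence carries a witness of length $w(f(\size{x}))$.

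Now I would unpad by setting
\[
  L' = \{ \tuple{x, y} \mid \size{y} = w(f(\size{x})), \ \tuple{\tuple{x, 1^{f(\size{x}) - \size{x}}}, y} \in L \}.
\]
Then $x \in A$ iff some $y$ of length $w(f(\size{x}))$ places $\tuple{x,y}$ in $L'$, i.e.\ $A = L'[w(f(n))]$, which is exactly the desired witness bound. It remains to check $L' \in \dtime(t'(f(n)))$: the decider for $L'$, on an input $\tuple{x,y}$ of total length $\size{x} + w(f(\size{x}))$, reconstructs the padded string $p = \tuple{x, 1^{f(\size{x}) - \size{x}}}$ of length $f(\size{x})$ and runs the decider for $L$ on $\tuple{p, y}$, which costs $O(t'(\size{p} + \size{y})) = O(t'(f(\size{x}) + w(f(\size{x}))))$ steps.

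The step I expect to be the crux is verifying that this cost fits the budget $t'(f(m))$ measured in the \emph{input} length $m = \size{x} + w(f(\size{x}))$ of $L'$, i.e.\ that $t'(f(\size{x}) + w(f(\size{x}))) = O\bigl(t'(f(\size{x} + w(f(\size{x}))))\bigr)$. This is precisely where weak superadditivity pays off and is the whole point of the hypothesis: setting $d = w(f(\size{x}))$ in $f(n+d) \ge f(n) + d$ gives $f(\size{x} + w(f(\size{x}))) \ge f(\size{x}) + w(f(\size{x}))$, and then monotonicity of $t'$ delivers the bound; the reconstruction overhead of length $f(\size{x})$ is dominated in the same way since $f(\size{x}) \le f(\size{x} + w(f(\size{x})))$ and $t'$ is at least linear. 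Hence $L' \in \dtime(t'(f(n)))$ and $A = L'[w(f(n))] \in \tiwi(t'(f(n)), w(f(n)))$, completing the argument.
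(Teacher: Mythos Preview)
Your proposal is correct and follows essentially the same padding argument as the paper's proof: pad inputs of $A$ to length $f(n)$, apply the hypothesis to the padded language, then unpad while using weak superadditivity in exactly the same way to push $t'(f(\size{x}) + \size{y})$ under $t'(f(\size{x} + \size{y}))$. The only cosmetic difference is that you append the padding block after $x$ whereas the paper prepends it; you are also more explicit than the paper in deriving $f(n)\ge n$ and the strict monotonicity of $f$ needed to recover $\size{x}$ from the padded length.
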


\begin{proof}
Suppose that $\dtime(t(n)) \subseteq \tiwi(t'(n),w(n))$.
Further, let
$K$ be an arbitrary language in $\dtime(t(f(n)))$.
Consider the function $B(n) = f(n) - n$.
Let
 \[
  \pad{B}{K} = \set{\tuple{1^k,x}}{k = B(\size{x}) \, \wedge \, x \in K}.
 \]
As $K \in \dtime(t(f(n)))$, we have $\pad{B}{K} \in \dtime(t(n))$.
By the assumption, $\pad{B}{K} \in \tiwi(t'(n),w(n))$.
Hence there is some $L \in \dtime(t'(n))$ such that $\pad{B}{K} = \proj{L}{w(n)}$.
Let
 \[
  L' = \set{\xypair}{y \in \Sigma^{w(f(\size{x}))} \, \wedge \, \tuple{1^{B(\size{x})},\xypair} \in L}.
 \]
Since $\tuple{1^{B(\size{x})},\tuple{x,y}} = \tuple{\tuple{1^{B(\size{x})},x},y}$, it follows that $K = L'[w(f(n))]$.
To show that $K \in \tiwi(t'(f(n)),w(f(n)))$, it then suffices to show that $L' \in \dtime(t'(f(n)))$.

Because of our choice of the function $B$ and since $L \in \dtime(t'(n))$, we can determine if $\xypair \in L'$, where $\size{y} = w(f(\size{x}))$, in time
\begin{align*}
  O(t'(\size{\tuple{1^{B(\size{x})},\xypair}}))
   &= O(t'(B(\size{x}) + \size{x} + \size{y})) \\
   &= O(t'(f(\size{x}) + \size{y})) \\
   &\leq O(t'(f(\size{x} + \size{y}))) \\
   &= O(t'(f(\size{\xypair}))).
\end{align*}
Therefore, $L' \in \dtime(t'(f(n)))$.
\end{proof}

We say that a function $f(n)$ is \emph{subpolynomial} if for every $c > 0$ we have that $f(n) = o(n^c)$,
and \emph{semihomogeneous} (see~\cite{Book1970:time}) if for any $d > 1$ there is a constant $C = C(d)$ such that eventually $f(dn) \le C \cdot f(n)$.
Note that any polylogarithmic function (such as $f(n) = (\lg n)^3$) is subpolynomial and also semihomogeneous.

This leads up to our first amplification argument, showing that a form of effective guessing with small witnesses can be amplified to yield a larger speedup at the cost of only a moderate amount of additional guessing.

\begin{lemma}
\label{lemma:reverse:iterative1}
Let $c \ge 1$ be a constant, and let $v(n)$ be a non-decreasing function that is subpolynomial, semihomogeneous, and increases infinitely often.
If $\dtime(n^c) \subseteq \tiwi(n,v(n))$ then for all $k \in \N$, $\dtime(n^{c^{k+1}}) \subseteq$ ${\tiwi(n,C^k \cdot v(n^{c^k}))}$ for some constant $C \ge 1$.
\end{lemma}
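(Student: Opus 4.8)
The plan is to argue by induction on $k$, using the reverse-padding lemma (Lemma~\ref{lemma:reverse:padding}) to lift the assumed speedup to a polynomially larger time bound, and then the Strong Speedup Lemma (Lemma~\ref{lemma:speedup:strong}) to compress the time bound back down to linear while controlling the witness blow-up. The base case $k = 0$ is exactly the hypothesis $\dtime(n^c) \subseteq \tiwi(n, v(n))$, since $C^0 = 1$ and $v(n^{c^0}) = v(n)$ for any choice of $C \ge 1$.

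For the inductive step, assume $\dtime(n^{c^{k+1}}) \subseteq \tiwi(n, C^k v(n^{c^k}))$. First I would apply Lemma~\ref{lemma:reverse:padding} with the weakly superadditive function $f(n) = n^c$ (weak superadditivity of $n^c$ for $c \ge 1$ is recorded just before that lemma), taking $t(n) = n^{c^{k+1}}$, $t'(n) = n$, and $w(n) = C^k v(n^{c^k})$. Since $t(f(n)) = n^{c^{k+2}}$, $t'(f(n)) = n^c$, and $w(f(n)) = C^k v(n^{c^{k+1}})$, this yields $\dtime(n^{c^{k+2}}) \subseteq \tiwi(n^c,\, C^k v(n^{c^{k+1}}))$. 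Next I would invoke the Strong Speedup Lemma with the \emph{original} hypothesis $\dtime(n^c) \subseteq \tiwi(n, v(n))$ (so $t(n) = n^c$, $t'(n) = n$, $w(n) = v(n)$) and witness function $w'(n) = C^k v(n^{c^{k+1}})$. Applying the lemma requires checking that $v(n + w'(n)) \le C' v(n)$ eventually for some constant $C'$: because $v$ is subpolynomial, $v(m) = o(m^{1/(2c^{k+1})})$, so $w'(n) = o(n^{1/2}) = o(n)$ and hence $n + w'(n) \le 2n$ eventually; monotonicity of $v$ together with semihomogeneity applied with $d = 2$ then gives $v(n + w'(n)) \le v(2n) \le C(2)\, v(n)$ eventually, so $C' = C(2)$ works. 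The Strong Speedup Lemma therefore yields $\tiwi(n^c, w'(n)) \subseteq \tiwi(n,\, C(2) v(n) + w'(n))$, and composing with the previous inclusion gives $\dtime(n^{c^{k+2}}) \subseteq \tiwi(n,\, C(2) v(n) + C^k v(n^{c^{k+1}}))$.

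Finally I would tidy up the witness bound. Since $v$ is non-decreasing and $c^{k+1} \ge 1$, we have $v(n) \le v(n^{c^{k+1}})$, so $C(2) v(n) + C^k v(n^{c^{k+1}}) \le (C(2) + C^k)\, v(n^{c^{k+1}})$; fixing the constant $C := C(2) + 1$ at the outset — it depends only on $v$, hence is uniform in $k$, and $C(2) \ge 1$ because $v$ is non-decreasing and increases infinitely often — one checks $C(2) + C^k \le C^{k+1}$ for every $k \ge 0$, so the right-hand side is at most $C^{k+1} v(n^{c^{k+1}})$. An application of Lemma~\ref{lemma:tiwi:increase} (legitimate since the time bound is $n = \Omega(n)$ and the bound just established holds eventually) then gives $\dtime(n^{c^{k+2}}) \subseteq \tiwi(n,\, C^{k+1} v(n^{c^{k+1}}))$, closing the induction.

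I expect the main obstacle to be the verification needed to apply the Strong Speedup Lemma in the inductive step, namely showing that the inflated witness size $v(n + w'(n))$ stays within a constant factor of $v(n)$: this is the one place where the growth hypotheses on $v$ are genuinely used, and it requires combining subpolynomiality (to force $w'(n) = o(n)$) with monotonicity and semihomogeneity. A secondary point needing care is choosing $C$ once and for all so that $C(2) + C^k \le C^{k+1}$ holds simultaneously for every $k$.
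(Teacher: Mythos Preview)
Your proof is correct and uses the same two tools as the paper (Lemma~\ref{lemma:reverse:padding} and the Strong Speedup Lemma~\ref{lemma:speedup:strong}), but you wire them together in the dual way. The paper pads the \emph{original} hypothesis with $f(n)=n^{c^k}$ to obtain $\dtime(n^{c^{k+1}})\subseteq\tiwi(n^{c^k},C^{k-1}v(n^{c^k}))$ and then compresses via Strong Speedup applied to the \emph{inductive} hypothesis; you instead pad the inductive hypothesis with $f(n)=n^{c}$ and compress via Strong Speedup applied to the original hypothesis. Your ordering buys a cleaner uniformity argument: the semihomogeneity check reduces to $v(2n)\le C(2)\,v(n)$ with dilation factor~$2$ independent of~$k$, whereas in the paper's ordering the effective dilation is $(1+\eps)^{c^{k-1}}$ and one must take $\eps$ small (depending on~$k$) to keep the constant uniform. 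Your final bound $C(2)v(n)+C^{k}v(n^{c^{k+1}})\le C^{k+1}v(n^{c^{k+1}})$ with $C=C(2)+1$ also relies only on monotonicity of~$v$, while the paper's corresponding step leans on ``$v$ increases infinitely often'' to control the ratio $v(n^{c^{k-1}})/v(n^{c^k})$; in your argument that hypothesis is used only in the trivial way, to guarantee $v$ is eventually positive so that $C(2)\ge 0$ and hence $C\ge 1$.
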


\begin{proof}
Suppose that
 \(
  \dtime(n^c) \subseteq \tiwi(n,v(n)).
 \)
We will show by induction that
 \[
  \dtime(n^{c^{k+1}}) \subseteq \tiwi(n,C^k \cdot v(n^{c^k}))
 \]
for all $k \in \N$.
The base case holds for $k=0$ since $\dtime(n^c) \subseteq \tiwi(n,v(n))$ is true by assumption.
For the inductive step, suppose that for some $k \ge 1$ we have
 \[
  \dtime(n^{c^k}) \subseteq \tiwi(n,C^{k-1} \cdot v(n^{c^{k-1}})).
 \]
Now apply Lemma~\ref{lemma:speedup:strong} to this inclusion with $w(n) = C^{k-1} \cdot v(n^{c^{k-1}})$, $w'(n) = C^{k-1} \cdot v(n^{c^k})$, $t(n) = n^{c^k}$, and $t'(n) = n$, to obtain
 \[
  \tiwi(n^{c^k}, C^{k-1} \cdot v(n^{c^k})) \subseteq \tiwi(n, C^k \cdot v(n^{c^{k-1}}) + C^{k-1} \cdot v(n^{c^k})).
 \]
We can do this because the properties of $v$ ensure that for any $\eps > 0$, eventually
\begin{align*}
 w(n + w'(n))
  &= C^{k-1} \cdot v((n + w'(n))^{c^{k-1}}) \\
  &\le C^{k-1} \cdot v(((1 + \eps) n)^{c^{k-1}}) \\
  &\le C^{k-1} \cdot C \cdot v(n^{c^{k-1}}) \\
  &= C\cdot w(n).
\end{align*}
As a second step, now apply Lemma~\ref{lemma:reverse:padding} to the same assumption, with $f(n) = n^{c^k}$, $w(n) = C^{k-1} \cdot v(n)$, $t(n) = n^c$, and $t'(n) = n$, to obtain
 \[
  \dtime(n^{c^{k + 1}}) \subseteq \tiwi(n^{c^k}, C^{k-1} \cdot v(n^{c^k})).
 \]
This allows us to conclude that
 \[
  \dtime(n^{c^{k+1}}) \subseteq \tiwi(n,C^k \cdot v(n^{c^{k-1}}) + C^{k-1} \cdot v(n^{c^k})),
 \]
and hence
 \[
  \dtime(n^{c^{k+1}}) \subseteq \tiwi(n,C^k \cdot v(n^{c^k}) (v(n^{c^{k-1}})/v(n^{c^k}) + 1/C)).
 \]
As $v$ increases infinitely often, by Lemma~\ref{lemma:tiwi:increase} we then have that
 \[
  \dtime(n^{c^{k+1}}) \subseteq \tiwi(n,C^k \cdot v(n^{c^k})),
 \]
which completes our proof.
\end{proof}

We now wrap up our second amplification argument into a theorem.

\begin{theorem}
\label{theorem:nlin}
If there exists a constant $c > 1$ and a subpolynomial function $v(n)$ such that $\dtime(n^c) \subseteq \tiwi(n,v(n))$, then $\PTIME \subsetneq \ntime(n) \subsetneq \NPTIME$.
\end{theorem}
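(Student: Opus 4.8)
The plan is to iterate the amplification of Lemma~\ref{lemma:reverse:iterative1} until the deterministic time bound exhausts every polynomial, then collapse the resulting witness bound to a linear one using Lemmas~\ref{lemma:tiwi:increase} and~\ref{lemma:tiwi:ntime}, and finally upgrade the chain $\PTIME\subseteq\ntime(n)\subseteq\NPTIME$ to strict inclusions with a nondeterministic time hierarchy theorem together with a padding argument.

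First I would normalize the hypothesis so that Lemma~\ref{lemma:reverse:iterative1} applies. Replacing $v$ by a routine smoothing --- for instance, a suitable running maximum of $n\mapsto\lceil n^{1/r(n)}\rceil$, where $r\to\infty$ is chosen non-decreasing with $v(n)\le n^{1/r(n)}$ eventually, then multiplied by $\lceil\lg(n+2)\rceil$ --- I may assume that $v$ is non-decreasing, subpolynomial, semihomogeneous, and increases infinitely often; the point is that the subpolynomial semihomogeneous functions are cofinal among the subpolynomial functions. Since each replacement only enlarges the witness bound, Lemma~\ref{lemma:tiwi:increase} shows that the hypothesis $\dtime(n^c)\subseteq\tiwi(n,v(n))$ is preserved.

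Next I would apply Lemma~\ref{lemma:reverse:iterative1} to obtain a single constant $C\ge 1$ with $\dtime(n^{c^{k+1}})\subseteq\tiwi(n,C^k\cdot v(n^{c^k}))$ for all $k\in\N$. Given any exponent $m\ge 1$, pick $k$ with $c^{k+1}\ge m$ (possible since $c>1$), so $\dtime(n^m)\subseteq\dtime(n^{c^{k+1}})\subseteq\tiwi(n,C^k\cdot v(n^{c^k}))$. Here $k$, and hence $C^k$, is a constant depending only on $m$, and since $v$ is subpolynomial, $v(n^{c^k})=o\bigl((n^{c^k})^{1/(2c^k)}\bigr)=o(\sqrt{n})$, so $C^k\cdot v(n^{c^k})\le n$ eventually. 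By Lemma~\ref{lemma:tiwi:increase} (with time bound $n=\Omega(n)$) this gives $\tiwi(n,C^k\cdot v(n^{c^k}))\subseteq\tiwi(n,n)$, and by Lemma~\ref{lemma:tiwi:ntime} (with $w(n)=n=\Omega(n)$) we get $\tiwi(n,n)\subseteq\tiwi(n,O(n))=\ntime(n)$. Hence $\dtime(n^m)\subseteq\ntime(n)$ for every $m$, so $\PTIME=\bigcup_m\dtime(n^m)\subseteq\ntime(n)$.

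It remains to make both inclusions strict, which uses no effective guessing. The inclusion $\ntime(n)\subseteq\NPTIME$ is immediate, and a nondeterministic time hierarchy theorem gives $\ntime(n)\subsetneq\ntime(n^2)$; were $\ntime(n)=\NPTIME$ we would have $\ntime(n^2)\subseteq\NPTIME=\ntime(n)$, a contradiction, so $\ntime(n)\subsetneq\NPTIME$. For the first inclusion, suppose $\PTIME=\ntime(n)$; a standard padding argument then collapses $\NPTIME$ into $\PTIME$, since for $L\in\ntime(n^k)$ the padded language $\{\tuple{x,1^{\size{x}^k}}\mid x\in L\}$ is decidable in nondeterministic time linear in its own length (recover $\size{x}$ from the total length, verify the padding, then simulate the machine for $L$ in $O(\size{x}^k)$ further steps), hence lies in $\ntime(n)=\PTIME$, hence in $\dtime(N^d)$ for some $d$, and padding the input and running this decider places $L$ in $\dtime(n^{kd})\subseteq\PTIME$. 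So $\PTIME=\NPTIME$, which with $\PTIME=\ntime(n)$ contradicts $\ntime(n)\subsetneq\NPTIME$; therefore $\PTIME\subsetneq\ntime(n)$, and combining, $\PTIME\subsetneq\ntime(n)\subsetneq\NPTIME$. I expect the only real friction to be the normalization of $v$ --- arranging semihomogeneity for an arbitrary subpolynomial function --- since the amplification is already packaged in Lemma~\ref{lemma:reverse:iterative1} and the strictness is a textbook hierarchy-plus-padding argument.
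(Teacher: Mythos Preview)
Your proposal is correct and follows essentially the same route as the paper: invoke Lemma~\ref{lemma:reverse:iterative1} to amplify the hypothesis to every polynomial time bound, then collapse the subpolynomial witness into $\ntime(n)$ via Lemmas~\ref{lemma:tiwi:increase} and~\ref{lemma:tiwi:ntime}, and finish with hierarchy-plus-padding for strictness. If anything you are more careful than the paper, which silently assumes $v$ already satisfies the non-decreasing, semihomogeneous, infinitely-increasing hypotheses of Lemma~\ref{lemma:reverse:iterative1} and only sketches the $\PTIME\ne\ntime(n)$ strictness; your normalization step and your explicit treatment of both strict inclusions fill real gaps in the paper's argument.
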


\begin{proof}
Suppose that there exists $c > 1$ and a subpolynomial function $v(n)$ such that $\dtime(n^c) \subseteq \tiwi(n,v(n))$.
Since $v(n)$ is subpolynomial, so is $v(n^c)$ for any $c>0$.
By applying Lemma~\ref{lemma:reverse:iterative1}, we then have that for all $k \in \N$,
 \[
  \dtime(n^{c^k}) \subseteq \tiwi(n,w(n))
 \]
for some subpolynomial function $w(n)$.
Since $c > 1$,
 \(
  \lim_{k\to\infty}c^k = \infty.
 \)
By Lemmas~\ref{lemma:tiwi:increase} and~\ref{lemma:tiwi:ntime} we then have that for all $k \in \N$,
 \(
  \tiwi(n,w(n)) \subseteq \tiwi(n,O(n)) = \ntime(n).
 \)
It follows that $\PTIME \subseteq \ntime(n)$.
Further, $\PTIME \neq \ntime(n)$ can be shown by a standard padding argument applied to the nondeterministic time hierarchy theorem \cite{Zak:1983}.
\end{proof}

Via Lemma~\ref{lemma:tiwi:ntigu}, Theorem~\ref{theorem:p:nlin} is a corollary of Theorem~\ref{theorem:nlin} for the special case that $v(n)$ is a subpolynomial function that grows faster than any polylogarithmic function; $v(n) = (\lg n)^{\lg\lg n}$ is an example of such a function.
A result similar to Theorem~\ref{theorem:nlin} was sketched in \cite{Bloch1994:how}, with a logarithmic witness bound $v(n)$ rather than our stronger subpolynomial bound.
To extract the most out of the iterated guessing technique, we have found that it is crucial (as we have done) to carefully take into account how the witness size grows as simulations are composed.

\section{Effective Guessing Would Imply a SAT\\ Breakthrough}
\label{sec:egh:weak}

We now show that if general computations can be significantly sped up by using nondeterministic guessing to replace part of the computation, then this would imply a breakthrough for solving SAT.
More precisely, we show that using guessing to obtain an at least logarithmic factor reduction in time would imply that SAT can be decided in linear time on a nondeterministic multitape Turing machine.\footnote{The question of whether SAT can be solved in linear time on a nondeterministic multitape Turing machine has previously been discussed within the cstheory stackexchange \mbox{community \cite{Wehar2015:question}.}}
Simple nondeterministic Turing machine algorithms for SAT use $O(n(\lg n)^c)$ time, for some constant $c\ge 1$, but this bound has resisted improvement for several decades and the at least logarithmic factor has stubbornly remained~\cite{Schnorr1978:satisfiability,Santhanam2001:lower}.

A high level sketch of the argument for proving Theorem~\ref{theorem:sat:nlin} is as follows.
First, we show that any $n$-bit CNF formulas (in a reasonable encoding) can have at most $4n/\lg n$ variables.
Then we establish a fairly precise time bound for sorting on deterministic multi-tape Turing machines: a list of $m$ integers, each of size $\lg n$, can be sorted in at most $O(m(\lg m)(\lg n))$ steps.
Combining these results we can show that \sat is in $\tiwi(n\lg n,4n/\lg n)$.
Now if $\dtime(n\lg n)$ were contained in $\ntime(n)$, then $\tiwi(n\lg n,4n/\lg n)$ would be contained in $\ntime(n)$, and therefore $\sat \in \ntime(n)$.
We proceed by proving technical results which will allow us to formalize this argument.

The following lemma shows that a nontrivial speedup of deterministic computation would also allow computations with a significant nondeterministic component to be sped up.

\begin{lemma}[Weak Speedup]
\label{lemma:speedup:weak}
If $\dtime(t(n)) \subseteq \ntime(n)$, then
 \[
  \tiwi(t(n),n) \subseteq \ntime(n).
 \]
\end{lemma}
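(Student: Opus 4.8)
The plan is to unfold the definitions and then apply the hypothesis to the deterministic decider sitting inside a $\tiwi$ class. Suppose $K \in \tiwi(t(n),n)$. By Definition~\ref{def:tiwi}, there is a language $L \in \dtime(t(n))$ with $K = L[n]$, i.e.\ $x \in K$ iff there exists $y \in \Sigma^{\size{x}}$ with $\tuple{x,y} \in L$. The hypothesis $\dtime(t(n)) \subseteq \ntime(n)$ applies directly to $L$, so $L \in \ntime(n)$; equivalently, by Lemma~\ref{lemma:tiwi:ntime} (taking the identity as the witness bound, which is $\Omega(n)$), $L \in \tiwi(n, O(n))$. So there is $L' \in \dtime(n)$ and a constant $a$ with $L = L'[an]$.

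Next I would chase the existential projections through Lemma~\ref{lemma:project:composed}. We have $K = L[n] = (L'[an])[n] = L'[an, n]$, which Lemma~\ref{lemma:project:composed} rewrites as $L'[\,n + a(n+n)\,] = L'[(2a+1)n]$. Since $L' \in \dtime(n)$ and the witness bound $(2a+1)n$ is $\Omega(n)$, Lemma~\ref{lemma:tiwi:ntime} gives $K = L'[(2a+1)n] \in \tiwi(n, O(n)) = \ntime((2a+1)n) = \ntime(n)$. That chain — rewrite the composed projection by Lemma~\ref{lemma:project:composed}, then collapse the resulting linear witness bound via Lemma~\ref{lemma:tiwi:ntime} — is the whole argument, and it mirrors the informal sketch given before Lemma~\ref{lemma:speedup:strong:sub} in Section~\ref{sec:egh:strong}.

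One subtlety to watch is the bookkeeping in the second projection step: when we apply $\ntime(n) = \tiwi(n,O(n))$ to $L$, the deterministic language $L'$ decides inputs of the form $\tuple{\tuple{x,y},z}$ where $z$ records the nondeterministic moves, so its running time is linear in $\size{x} + \size{y} + \size{z}$, and with $\size{y} = \size{x}$ and $\size{z} = a(\size{x}+\size{y})$ this is still linear in $\size{x}$. That is exactly what makes the final time bound $O(n)$ rather than something larger, and it is why the argument needs $\tiwi$ to be measured in terms of input-plus-witness size. I expect this is the only place where care is required; everything else is a mechanical substitution into the two cited lemmas.

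Alternatively, and perhaps more cleanly, one can route the argument through Lemma~\ref{lemma:speedup:strong:sub} with $t'(n) = n$ and $w(n) = O(n)$: from $\dtime(t(n)) \subseteq \tiwi(n, O(n))$ it yields $\tiwi(t(n), n) \subseteq \tiwi(n, O(n + n) + n) = \tiwi(n, O(n))$, and then $\tiwi(n,O(n)) = \ntime(n)$ by Lemma~\ref{lemma:tiwi:ntime}. I would present whichever of these two framings keeps the constants most transparent; the second avoids re-deriving the projection identity by hand. The main (very mild) obstacle is simply making sure the constant hidden in the $O(n)$ witness bound does not interact badly with the hypothesis when it is reapplied — but since the hypothesis is about $\dtime(t(n))$ with $t$ fixed and we only reapply it once, there is no iteration and hence no blow-up.
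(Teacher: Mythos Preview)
Your proposal is correct and follows essentially the same argument as the paper: unfold $\tiwi$, apply the hypothesis to the underlying deterministic language, rewrite it via Lemma~\ref{lemma:tiwi:ntime} as an existential projection of a linear-time language, compose projections using Lemma~\ref{lemma:project:composed} to get a single linear witness bound, and collapse back to $\ntime(n)$. The only differences are cosmetic (variable names, and your optional detour through Lemma~\ref{lemma:speedup:strong:sub}, which the paper does not use but which is also valid).
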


\begin{proof}
Suppose $\dtime(t(n)) \subseteq \ntime(n)$ and let $J \in \tiwi(t(n),n)$.
Then there is some $K \in \dtime(t(n))$ such that $J = K[n]$.
By our assumption, $K \in \ntime(n)$.
By Lemma~\ref{lemma:tiwi:ntime} it follows that there is some $c>0$ such that $K \in \tiwi(n,cn)$.
Hence there is some $L \in \dtime(n)$ such that $K = L[cn]$.
We conclude via Lemma~\ref{lemma:project:composed} that $J = K[n] = L[cn,n] = L[n+2cn]$, so again by Lemma~\ref{lemma:tiwi:ntime}, $J \in \tiwi(n,(2c+1)n) \subseteq \ntime(n)$.
\end{proof}

Although Lemma~\ref{lemma:speedup:weak} is closely related to Lemma~\ref{lemma:speedup:strong}, the weaker hypothesis of the Weak Speedup Lemma means that these results are not directly comparable.

\subsection{Improved Algorithms For SAT From Effective\\ Guessing}

We now apply the Weak Speedup Lemma to show that effective guessing implies improved algorithms for \sat.

\begin{corollary}
\label{cor:egh:bettersat}
Suppose $\sat \in \tiwi(t(n),n)$ for some function $t \colon \N \to \N$.
If further $\dtime(t(n)) \subseteq \ntime(n)$, then $\sat \in \ntime(n)$.
\end{corollary}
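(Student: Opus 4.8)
The plan is to derive this directly from the Weak Speedup Lemma (Lemma~\ref{lemma:speedup:weak}), with essentially no new work. Assume both hypotheses: $\sat \in \tiwi(t(n),n)$ and $\dtime(t(n)) \subseteq \ntime(n)$. Feeding the second hypothesis into Lemma~\ref{lemma:speedup:weak} immediately yields the class inclusion $\tiwi(t(n),n) \subseteq \ntime(n)$. Since $\sat$ lies in the left-hand class by the first hypothesis, we conclude $\sat \in \ntime(n)$, which is exactly the claim.

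If one prefers to spell out the mechanism rather than merely cite the lemma, the argument unfolds as follows. From $\sat \in \tiwi(t(n),n)$ we obtain some $K \in \dtime(t(n))$ with $\sat = K[n]$. The speedup hypothesis gives $K \in \ntime(n)$, and then Lemma~\ref{lemma:tiwi:ntime} supplies a constant $c>0$ and a language $L \in \dtime(n)$ with $K = L[cn]$. Composing the two existential projections via Lemma~\ref{lemma:project:composed} gives $\sat = K[n] = L[cn,n] = L[cn + (n+2cn)\text{-type expression}]$, i.e.\ an existential projection of a linear-time language by a linearly bounded witness; Lemma~\ref{lemma:tiwi:ntime} then places this back in $\ntime(n)$. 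The only delicate point is the bookkeeping of witness sizes through the composition, which is precisely what Lemmas~\ref{lemma:project:composed} and~\ref{lemma:tiwi:ntime} were set up to control, so no obstacle arises here.

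I do not expect any real difficulty in the corollary itself; the substantive content lies elsewhere, namely in verifying that the hypothesis $\sat \in \tiwi(t(n),n)$ actually holds for a genuinely superlinear $t$. For $t(n) = n\lg n$ this is the job of Theorem~\ref{theorem:sat}, which in turn rests on the bound of at most $4n/\lg n$ distinct variables in an $n$-bit CNF instance together with an $O(m(\lg m)(\lg n))$-step multitape Turing machine sorting bound for $m$ integers of bit-length $\lg n$. Granting that input, Theorem~\ref{theorem:sat:nlin} then follows by instantiating this corollary with $t(n) = n\lg n$.
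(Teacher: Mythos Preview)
Your proposal is correct and matches the paper's approach exactly: the paper's proof is the one-line application of Lemma~\ref{lemma:speedup:weak} that you give in your first paragraph. Your second paragraph merely re-derives that lemma's content (with slightly sloppy bookkeeping in the displayed projection), and your third paragraph is contextual commentary, neither of which is needed for the corollary itself.
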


\begin{proof}
From Lemma~\ref{lemma:speedup:weak} it follows that if $\dtime(t(n)) \subseteq \ntime(n)$ then $\tiwi(t(n),n)$ $\subseteq \ntime(n)$.
\end{proof}

In Corollary~\ref{cor:egh:bettersat}, the time upper bound $t(n)$ for \sat enables the efficient guessing hypothesis to yield an improved algorithm for \sat.
Classical results imply that $\sat \in \tiwi(n(\lg n)^c,n)$ for some unspecified constant $c$.
This is because a guess-and-check procedure can be implemented via sorting~\cite{Schnorr1978:satisfiability}, and the number of variables determinines the witness size yet cannot exceed the input size.
We could therefore conclude a linear time upper bound for \sat from an effective guessing hypothesis of the form $\dtime(n(\lg n)^c) \subseteq \ntime(n)$.

A smaller time bound for \sat permits a weaker effective guessing hypothesis.
How weak can the hypothesis be made?
It turns out that we can actually take $c=1$ with some additional work.
This sharper bound requires two ingredients.

The first ingredient is a reasonable encoding of SAT, which distinguishes between formulas in conjunctive normal form (CNF) which only differ by a permutation of their variable names.
Reasonable encodings are used in Cook's original proof of the Cook--Levin theorem~\cite{Cook1971:complexity} and Karp's list of 21 NP-complete problems~\cite{Karp1972:reducibility}.
In fact, we are not aware of any work which relies on a particular encoding of \sat yet does not use a reasonable encoding of \sat.
Furthermore, the standard DIMACS CNF encoding used by SAT solvers\footnote{See \url{http://archive.dimacs.rutgers.edu/pub/challenge/satisfiability/doc/}.} also qualifies as reasonable.
We show that a reasonable encoding of \sat has the property that an $n$-bit CNF formula cannot represent more than $O(n/\lg n)$ different variables, eventually.

Second, we need sharp time bounds for sorting on a Turing machine.
Standard mergesort algorithms are slightly wasteful when implemented on a Turing machine, so we take a closer look at Schnorr's classical approach (from~\cite{Schnorr1978:satisfiability}) to obtain a more precise time bound.

The saving in the witness size due to a reasonable encoding is offset by overhead from sorting, but
combining these two ingredients allows us to conclude $c=1$.

\subsection{Bounding the Number of Variables in a CNF Formula}
\label{sec:bounding}

The following technical lemmas will be used to relate the number of variables in a SAT instance to its size.

\begin{lemma}
\label{lemma:logs}
Suppose $x_0>0$ and $v$ is a real-valued function which satisfies the inequality $v(x)\lg v(x) \le x$ for all $x\ge x_0$.
Then for every $C > 1$ there is some $x_1 > 1$ such that for every $x \ge x_1$, $v(x) < Cx/\lg x$.
\end{lemma}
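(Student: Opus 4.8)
The plan is to argue pointwise by contradiction. Fix $C > 1$. The intuition is that the hypothesis $v(x)\lg v(x) \le x$ pins $v(x)$ down to be roughly $x/\lg x$, so if $v(x)$ were as large as $Cx/\lg x$, then feeding $Cx/\lg x$ through the (eventually increasing) map $t \mapsto t\lg t$ would already exceed $x$, expanding out to an inequality of the shape $(C-1)\lg x \le C\lg\lg x - C\lg C$, which must fail once $\lg x$ is large. This is exactly what determines the threshold $x_1$.

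Concretely, I would first record that $\psi(t) = t\lg t$ is strictly increasing on $[1,\infty)$ (since $\psi'(t) = \lg t + 1/\ln 2 > 0$ there). Then I would choose $x_1 > 1$ large enough that, simultaneously, (i) $x_1 \ge x_0$; (ii) $Cx/\lg x \ge 2$ for all $x \ge x_1$, which is possible because $Cx/\lg x \to \infty$; and (iii) $(C-1)\lg x > C\lg\lg x - C\lg C$ for all $x \ge x_1$, which is possible because $C - 1 > 0$ while $C\lg\lg x - C\lg C = o(\lg x)$. Now fix any $x \ge x_1$. If $v(x) < 2$ (which subsumes any degenerate case where $v(x)$ is non-positive and the hypothesis is vacuous), then $v(x) < 2 \le Cx/\lg x$ and we are done. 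Otherwise $v(x) \ge 2$, and I suppose for contradiction that $v(x) \ge Cx/\lg x$. Both $v(x)$ and $Cx/\lg x$ lie in $[2,\infty)$, so monotonicity of $\psi$ together with the hypothesis gives
\[
  x \;\ge\; v(x)\lg v(x) \;=\; \psi(v(x)) \;\ge\; \psi\!\left(\frac{Cx}{\lg x}\right) \;=\; \frac{Cx}{\lg x}\bigl(\lg C + \lg x - \lg\lg x\bigr).
\]
Dividing by $x > 0$ and multiplying by $\lg x$ yields $\lg x \ge C\lg C + C\lg x - C\lg\lg x$, i.e.\ $(C-1)\lg x \le C\lg\lg x - C\lg C$, contradicting the choice of $x_1$. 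Hence $v(x) < Cx/\lg x$ for every $x \ge x_1$, which is the claim.

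I do not expect a genuine obstacle here: the argument is a routine asymptotic estimate. The only points that need care are (a) ensuring that the comparison value $Cx/\lg x$ has been pushed into the range $[1,\infty)$ where $t\mapsto t\lg t$ is monotone, so that the hypothesis $v(x)\lg v(x)\le x$ can legitimately be transported across the assumed inequality $v(x) \ge Cx/\lg x$; and (b) peeling off the degenerate small-$v(x)$ case before invoking logarithms. Both are handled by absorbing them into the single threshold $x_1$.
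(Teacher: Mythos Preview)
Your proof is correct but proceeds differently from the paper's. The paper passes to an extremal function $w(x)$ with $w(x)\lg w(x) = x$ (so $v(x)\le w(x)$), writes $w(x)=2^{k(x)}$, and computes the ratio $w(x)\big/(x/\lg x) = 1 + (\lg k(x))/k(x)$ explicitly, observing that it tends to $1$. You instead argue directly with $v$: using only that $\psi(t)=t\lg t$ is increasing on $[1,\infty)$, you transport the hypothesis $\psi(v(x))\le x$ across the assumed bound $v(x)\ge Cx/\lg x$ and reduce to the elementary inequality $(C-1)\lg x \le C\lg\lg x - C\lg C$, which fails for large $x$. Your route sidesteps the implicit appeal to the inverse of $t\lg t$ (the paper never actually justifies that such a $w$ exists and dominates $v$), and your explicit handling of the case $v(x)<2$ is cleaner than the paper's tacit assumption that $v$ is large. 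Conversely, the paper's computation makes the sharper statement $v(x)\lg x/x \to 1^{+}$ visible, whereas your argument only yields the strict inequality claimed in the lemma; for the application (Lemma~\ref{lemma:logsmain} and then Lemma~\ref{lemma:satliterals}) the two are equivalent.
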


\begin{proof}
Instead of working with $v(x)$ such that $v(x)\lg v(x) \le x$, let's work with an extremal function $w(x)$ such that $v(x) \le w(x)$ and $w(x)\lg w(x) = x$ for all $x \ge x_0$.
Further, put $w(x) = 2^{k(x)}$ and $x = w(x)\lg w(x) = k(x)2^{k(x)}$.
Then $w(x)/(x/\lg x) = \lg x/\lg w(x) = (k(x)+\lg k(x))/k(x) = 1 + (\lg k(x))/k(x)$, which tends to 1 as $k(x) \to \infty$ (which coincides with $x \to \infty$).
However, this expression is strictly greater than 1 for $k(x)>1$, i.e.~for $w(x) > 2$.
Eventually the fraction becomes arbitrarily close to 1, so we can say that eventually $w(x) < Cx/\lg x$ for any $C > 1$, and the result follows.
\end{proof}

\begin{lemma}
\label{lemma:logsmain}
Suppose $0<d<1$ and $x_0 > 0$.
Further, suppose that $v$ is a real-valued function such that $v(x) \ge 0$ and $(1-d)v(x)\lg v(x) \le x$ for all $x\ge x_0$.
Then for every $C \ge 4$ there is some $x_1 > 1$ such that for every $x \ge x_1$, $v(x) < Cx/\lg x$.
\end{lemma}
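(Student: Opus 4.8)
The plan is to deduce Lemma~\ref{lemma:logsmain} from the already-established Lemma~\ref{lemma:logs} by a rescaling of the argument that absorbs the factor $1-d$. Rewriting the hypothesis as $v(x)\lg v(x) \le x/(1-d)$ for all $x \ge x_0$, we see that $v$ almost satisfies the hypothesis of Lemma~\ref{lemma:logs}, except that the right-hand side is $x/(1-d)$ rather than $x$; the whole difficulty is to remove this scaling factor cleanly while keeping the leading constant under control.

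First I would introduce the rescaled function $\tilde v(y) = v((1-d)y)$, defined for $y \ge x_0/(1-d) > 0$. For such $y$ we have $(1-d)y \ge x_0$, so the hypothesis gives
\[
  \tilde v(y)\,\lg \tilde v(y) = v((1-d)y)\,\lg v((1-d)y) \le \frac{(1-d)y}{1-d} = y .
\]
Thus $\tilde v$ satisfies the hypothesis of Lemma~\ref{lemma:logs} with threshold $x_0/(1-d)$, and so for every $C' > 1$ there is a $y_1 > 1$ with $\tilde v(y) < C'\,y/\lg y$ for all $y \ge y_1$. Substituting back $x = (1-d)y$, this says $v(x) < C'\cdot\dfrac{x/(1-d)}{\lg(x/(1-d))}$ for all $x \ge (1-d)y_1$.

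Next I would simplify the logarithm. Since $0 < 1-d < 1$ we have $\lg\bigl(x/(1-d)\bigr) = \lg x + \lg\tfrac{1}{1-d} > \lg x$ once $x > 1$, hence $\dfrac{x/(1-d)}{\lg(x/(1-d))} < \dfrac{1}{1-d}\cdot\dfrac{x}{\lg x}$, and therefore $v(x) < \dfrac{C'}{1-d}\cdot\dfrac{x}{\lg x}$ eventually. Given a target $C \ge 4$, it then suffices to have chosen $C'$ close enough to $1$ that $C'/(1-d) \le C$, and to set $x_1 = \max\{2,\,(1-d)y_1\}$; this yields $v(x) < Cx/\lg x$ for all $x \ge x_1$. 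When $v(x) < 1$ there is nothing to prove for large $x$, since $x/\lg x \to \infty$, and the estimate above is unaffected because then $v(x)\lg v(x) \le 0$.

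The only step requiring care is the last one: the slack $C' > 1$ permitted by Lemma~\ref{lemma:logs} must be large enough to survive division by $1-d$ and still land below the target constant, which is precisely why the statement asks for $C \ge 4$ rather than the $C > 1$ of Lemma~\ref{lemma:logs}. Everything else — the change of variables and the elementary bound on $\lg(x/(1-d))$ — is routine.
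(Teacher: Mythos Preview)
Your argument is clean up to the last step, but that step does not go through for all $d \in (0,1)$. You need some $C' > 1$ with $C'/(1-d) \le C$, i.e.\ $C' \le C(1-d)$; such a $C'$ exists only when $C(1-d) > 1$, that is, $d < 1 - 1/C$. For $C = 4$ this forces $d < 3/4$, yet the hypothesis allows $d$ arbitrarily close to $1$. This is not cosmetic: with $d = 0.9$ the hypothesis $0.1\,v(x)\lg v(x) \le x$ is satisfied for all large $x$ by $v(x) = 9x/\lg x$, and that $v$ plainly violates $v(x) < 4x/\lg x$. So the lemma, read with a single fixed $d$ and \emph{every} $C \ge 4$, cannot hold; any valid argument must either let the constant depend on $d$ or assume the inequality for every $d \in (0,1)$ (which is exactly how the lemma is invoked downstream).

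For comparison, the paper takes a different route --- it rescales the \emph{value}, setting $w(x) = (1-d)v(x)$ and using $\lg(1-d) < 0$ to get $w\lg w < (1-d)v\lg v \le x$, then applies Lemma~\ref{lemma:logs} to $w$ --- but it silently re-selects $d$ as a function of $C$ (the smaller root of $d(1-d) = 1/C$), overwriting the $d$ given in the hypothesis. That manoeuvre is legitimate only if the hypothesis is available for every $d$, so both arguments run into the same quantifier issue. Your rescaling-of-argument idea is perfectly sound and in fact delivers the honest conclusion $v(x) < Cx/\lg x$ for every $C > 1/(1-d)$; it just cannot reach the uniform constant $4$ the lemma claims, because no argument can.
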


\begin{proof}
Given $C \ge 4$, let $d$ be the smaller of the two solutions of the equation $d(1-d) = 1/C$.
Since $0<1/C\le 1/4$, we have that $0<d \le 1/2$, and it follows that $\lg(1-d) < 0$.
Let $w(x) = (1-d)v(x)$.
Then for all $x\ge x_0$, $w(x)\lg w(x) = (1-d)v(x)(\lg(1-d) + \lg v(x)) < (1-d)v(x)\lg v(x) \le x$.
Since $1/d \ge 2 > 1$, by Lemma~\ref{lemma:logs} eventually $w(x) < (1/d)x/\lg x$.
Therefore eventually $v(x) < x/(d(1-d)\lg x) = Cx/\lg x$.
\end{proof}

We now assert that an encoding of \sat which removes all symmetries due to variable names does not constitute a reasonable encoding.
An unreasonable encoding could represent a CNF formula by a binary encoded integer which represents one particular CNF formula out of a predetermined list of equivalence classes of CNF formulas, with formulas regarded as equivalent up to reordering and renaming of variables.
We instead consider only reasonable encodings, which have the property that if two CNF formulas can be obtained from each other by simply permuting variable names, then these formulas will be represented by different words in the language.
With this restriction on what constitutes a reasonable encoding of SAT, we now prove an upper bound on how many variables can appear in a SAT instance in terms of its size.

\begin{lemma}
\label{lemma:satliterals}
In any reasonable encoding, $n$-bit CNF formulas eventually contain at most $4n/\lg n$ distinct variables.
\end{lemma}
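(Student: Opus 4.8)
The plan is to count the number of distinct words that a reasonable encoding must use to represent all CNF formulas on a given set of $k$ variables, and compare this with the number of available words of length at most $n$. Because a reasonable encoding is required to assign distinct words to CNF formulas that differ only by a permutation of the variable names, the encoding cannot be too compressed: roughly speaking, it must be able to distinguish at least $k!$ formulas that are essentially ``the same'' up to renaming. So I would first exhibit, for each $k$, a family of at least $k!$ pairwise non-isomorphic-as-encodings CNF formulas on $k$ variables — for instance, take a fixed formula whose clause structure breaks all symmetry (such as a conjunction of singleton clauses $x_1 \wedge \cdots \wedge x_k$ read in order, or a chain of implications), and let the $k!$ relabellings of its variables give $k!$ formulas which a reasonable encoding must encode by $k!$ distinct words. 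Since all of these are CNF formulas over $k$ variables, each uses some bounded number of clauses, but the point is only that a reasonable encoding is forced to use at least $k!$ distinct codewords for formulas ``of size comparable to $k$''.

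Next I would turn the counting bound into the claimed inequality. If all $k!$ of these words have length at most $n$, then $k! \le 2^{n+1}$, hence $\lg(k!) \le n+1$. Using the standard estimate $\lg(k!) \ge k\lg k - k\lg e \ge (1-d) k \lg k$ eventually, for any fixed $d$ with $0 < d < 1$ (this holds because $k \lg e = o(k \lg k)$), we obtain $(1-d) k \lg k \le n + 1 \le$ (something like) $2n$ eventually, so $(1-d')k\lg k \le n$ eventually for a slightly larger $d' < 1$. This is exactly the hypothesis of Lemma~\ref{lemma:logsmain} with $v(n) = k(n)$, the number of variables expressible in $n$ bits. Taking $C = 4$ in Lemma~\ref{lemma:logsmain} then yields $k(n) < 4n/\lg n$ eventually, which is the claim. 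The care needed here is in routing the $d$'s correctly so that the resulting constant in Lemma~\ref{lemma:logsmain} comes out to be exactly $4$; one may need to track whether $\lg(k!) \le n$ or $\lg(k!) \le n+O(1)$ and absorb the lower-order terms into $d$.

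The main obstacle, I expect, is pinning down precisely what ``reasonable encoding'' guarantees and making the combinatorial lower bound airtight. The informal requirement is just that relabelling variables gives a different codeword, but to get $k!$ distinct codewords I need the $k!$ relabelled formulas to themselves be pairwise distinct as labelled formulas — i.e. the base formula must have trivial automorphism group under variable permutation. Choosing the base formula to be, say, the ordered conjunction of literals $x_1 \wedge x_2 \wedge \cdots \wedge x_k$ (or $\neg x_1 \wedge x_2 \wedge \neg x_3 \wedge \cdots$ to also kill the literal-flip symmetry) does the job: any non-identity permutation of $\{x_1,\dots,x_k\}$ changes which variable sits in which clause position, so the $k!$ formulas are genuinely distinct, and a reasonable encoding must give them $k!$ distinct words. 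A secondary, purely bookkeeping obstacle is the passage from ``there exist $k!$ distinct $n$-bit words'' to ``$k \le$ (the number of variables any $n$-bit formula can have)'': I should phrase this as a statement about the maximum $k$ such that some $n$-bit word encodes a formula on $k$ variables, note that if an $n$-bit word encodes a $k$-variable formula then shorter relabellings are not guaranteed, so instead I argue contrapositively — if $n$-bit formulas could contain more than $4n/\lg n$ variables infinitely often, pick such an $n$ and such a formula $F$ on $k > 4n/\lg n$ variables; its $k!$ relabellings are all CNF formulas on $k$ variables and (by a reasonable encoding's monotonicity-free but length-respecting nature, or simply because relabelling does not blow up the size) are encoded by words of length $O(n)$, giving $k! \le 2^{O(n)}$ and contradicting $k > 4n/\lg n$ via Lemma~\ref{lemma:logsmain}. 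Getting this size-preservation-under-relabelling step stated cleanly for an arbitrary reasonable encoding is the delicate part.
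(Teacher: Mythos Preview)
Your proposal lands on essentially the same argument as the paper: start from an $n$-bit CNF formula with $v$ variables, observe that its $v!$ relabellings must receive distinct codewords under a reasonable encoding, deduce $\lg(v!) \le n$, apply Stirling (the paper uses the Robbins form) to get $(1-d)v\lg v < n$ eventually, and finish with Lemma~\ref{lemma:logsmain} at $C=4$.

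Two remarks. First, your opening detour of building a specific base formula such as $x_1 \wedge \cdots \wedge x_k$ is unnecessary and you correctly abandon it: the paper works directly with the given $n$-bit formula and its relabellings, which is exactly the contrapositive route you settle on in your final paragraph. Second, you are right to flag ``size preservation under relabelling'' as the delicate step. The paper dispatches it in one line (``at least $\lg(v!)$ bits are required in the worst case''), implicitly taking as part of ``reasonable encoding'' that permuting variable names does not change the encoding length; you have articulated this assumption more explicitly than the paper does.
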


\begin{proof}
Suppose $x$ is an $n$-bit input.
We are only interested in inputs that are valid CNF formulas, so further suppose that $x$ represents a propositional formula in CNF,
and that this formula uses $v$ distinct variables.
We will show that $v \le 4n/\lg n$ eventually.

Let $s_x$ be the $v$-element sequence formed by listing the first occurrence of each variable in the formula encoded by $x$.
(Note that $v$ depends on $x$.)
Any reasonable representation must be able to distinguish each of the $v!$ possible ways that $s_x$ can occur, one for each permutation of the variables.
Hence at least $\lg(v!)$ bits are required in the worst case, for any reasonable encoding of SAT.
By the Robbins bounds~\cite{Robbins1955:remark}
 \[
  \lg v! = v\lg v - v\lg e + (1/2)\lg(2\pi) + (1/2)\lg v + r_v
 \]
where $(1/(12v+1))\lg e < r_v < (1/12v)\lg e$,
and so for $n = \size{x}$ we have
 \[
  v\lg v - v\lg e + (1/2)\lg(2\pi) + (1/2)\lg v + (1/(12v+1))\lg e < n.
 \]
Hence for any $0 < d < 1$, eventually
 \(
  (1-d)v\lg v < n.
 \)
By Lemma~\ref{lemma:logsmain} there is then some $n_1 \ge n_0+1$ such that for all $n \ge n_1$, $v < 4n/\lg n$.
\end{proof}

\subsection{A More Precise Sorting Time Bound}
\label{sec:sorting:bound}

Results about sorting on a Turing machine are used in many classical papers.
However, as far as we are aware, a time bound has not been expressed in the literature in the precise form that we will present here.
We do not claim originality for such a result, but also have not been able to locate a proof with this bound.
We therefore provide a proof for completeness.

\begin{lemma}
\label{lemma:sorting}
A deterministic multitape Turing machine can sort a list of $m$ non-negative integers, each represented in binary encoding using \zap{at most }$\lg n$ bits, in $O(m(\lg m)(\lg n))$ steps.
\end{lemma}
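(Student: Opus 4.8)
The plan is to carry out a bottom-up (non-recursive) merge sort on a multitape Turing machine, following the classical approach of Schnorr~\cite{Schnorr1978:satisfiability} but tracking the overheads carefully. Write $\ell = \lg n$ for the number of bits used to encode each integer. Because every element has the same known width $\ell$, the machine never needs separator symbols to locate element boundaries; it only needs to be able to count up to $\ell$ and to $m$.

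First I would fix the invariant carried between passes: after pass number $i$ the list sits on one designated tape as a concatenation of sorted runs, each consisting of $2^i$ consecutive elements (except possibly the last), with each run terminated by an endmarker, for $i = 0,1,\dots,\lceil\lg m\rceil$. Pass $0$ is immediate, since a run of one element is already sorted, and after pass $\lceil\lg m\rceil$ the whole list is a single sorted run. Hence there are $\lceil\lg m\rceil + 1 = O(\lg m)$ passes in total.

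Next I would analyze a single pass and show it costs $O(m\ell)$ steps. A pass reads the current list off a source tape and writes the new list onto a destination tape, the two roles swapping on the following pass. Scanning the source tape left to right, for each adjacent pair of runs $A,B$ the machine (i) rewinds two auxiliary tapes $T_1,T_2$ and copies $A$ onto $T_1$ and $B$ onto $T_2$, each copy stopping at the run's endmarker; (ii) rewinds $T_1$ and $T_2$; and (iii) performs the standard two-way merge of $T_1$ and $T_2$, repeatedly comparing the $\ell$-bit elements currently under the two heads (by a synchronized bit-by-bit scan from the most significant bit), appending the smaller one to the destination tape, advancing past it, and finally flushing whatever remains of the run not yet exhausted, then writing an endmarker after the merged run. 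For a pair of runs of combined size $s$ elements each of steps (i)--(iii) takes $O(s\ell)$ steps, since each of the $s$ elements is touched a bounded number of times and each touch processes its $\ell$ bits. Because the runs partition the $m$ elements, summing over all pairs in the pass gives $O(m\ell)$ steps for the pass.

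Combining the two counts yields $O(m\ell)\cdot O(\lg m) = O(m(\lg m)(\lg n))$ overall, which is the claimed bound. The step I expect to require the most care is not the merge sort proper but the verification that the control machinery is genuinely lower-order: locating endmarkers, rewinding the auxiliary tapes, and --- most delicately --- counting out blocks of $\ell$ bits and runs of $2^i$ elements. A binary counter up to $2^i \le m$ has $\Theta(\lg m)$ bits, so testing against it at every element could in principle introduce an extra $\lg m$ factor; I would sidestep this by realizing the required counters in unary on their own tapes, so that decrementing, testing for zero, and re-initializing cost $O(1)$ amortized per element processed and therefore only $O(m\ell)$ per pass. With that accounting in place the bound follows.
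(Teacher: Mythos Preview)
Your proposal is correct and follows the same high-level strategy as the paper: bottom-up mergesort with $\lceil\lg m\rceil$ passes, each pass costing $O(m\lg n)$ steps, with unary counters to keep the bookkeeping lower-order. The one notable implementation difference is how the two runs being merged are made simultaneously accessible. You copy each adjacent pair $A,B$ from the source tape onto two auxiliary tapes $T_1,T_2$ before merging; the paper instead maintains two complete copies of the current list on parallel tapes (Source and Target) with the two heads staggered by $2^i$ elements, so that one head reads $A$ while the other reads $B$ and neither ever moves backward during a stage. Both devices solve the same problem---avoiding quadratic back-and-forth on a single source tape---and both yield $O(m\lg n)$ per pass; your version does a bit more rewinding (of $T_1,T_2$, bounded by the run sizes and hence $O(m\lg n)$ in total per pass), while the paper's version spends that budget once up front copying Result to the second tape. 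Either accounting gives the claimed $O(m(\lg m)(\lg n))$ bound.
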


\begin{proof}
We use a form of bottom-up mergesort.
Instead of a random access algorithm such as that of Batcher~\cite{Batcher1968:sorting}, we use a procedure that uses a fixed number of tapes and only sequential access, and can therefore be efficiently implemented on a deterministic multitape Turing machine.
This algorithm is a more detailed version of that outlined by Schnorr~\cite[Program $p_1$]{Schnorr1978:satisfiability}.
These additional details allow a more precise analysis of the time bound, which Schnorr was not attempting to optimise.

The algorithm proceeds in stages.
At each stage we use three tapes containing permutations of the list of $m$ integers: Result, Source, and Target.
During each stage, Source and Target are piecewise merged to form Result.
Result then becomes the Source for the next stage, and is copied to Target to begin the next stage.
Half the elements to be merged in each stage are on Source and the other half on Target: the actual contents of Source and Target are identical but we pay attention to a different pattern of sequences on Source compared to Target.
We use two copies of the list (one on Source and one on Target), rather than a single source tape, to avoid back-and-forth tape head moves.
This is key to keeping the runtime under control.

After $\lceil \lg m \rceil \le 2\lg m$ stages the current Result tape contains a sorted list.
Moreover, each stage uses $O(m\lg n)$ steps.
This is because we can use a small fixed number of tapes to keep track of various unary quantities, and two tapes as temporary workspace to copy the integers on Source and Target that are the current focus of attention.
This allows the machine to move the heads on Result, Source, and Target tapes only in one direction, with no backward motion required.
Backward motion is only used when the heads are repositioned to the start of each tape, at the end of each stage.
Moreover, the head movements on the auxiliary tapes only require a constant factor overhead.
The overall time bound then follows.

We first pad the input with dummy values that represent a number larger than the largest integer represented using $\lg n$ bits, so that the number of values in the list is a power of $2$ (and, in particular, $\lg m$ is a non-negative integer).
The overhead of this padding stage is included in the unspecified constant factor in the overall time.
(Moreover this also only increases the space used by at most a factor of 2.)
We now outline the key steps for the case where $m$ is a \mbox{power of $2$.}

For each $i=0,1,\dots,(\lg m) - 1$, if Source and Target at the start of stage $i$ contain $m/2^i$ sequences of sorted sublists, each sublist of length $2^i$, then at the end of the stage Result will contain $m/2^{i+1}$ sorted subsequences, each containing $2^{i+1}$ elements.
At stage $i$, the Source tape head is at the start of the tape, and we move the Target tape head to the position after the $2^i$th entry in the list (position $2^i\lg n$ if the first position on the tape is numbered $0$).
Once the first sublist has been processed, we move the heads forward by $2^i\lg n$ positions.
We proceed until the Target tape head reaches the position after the end of the whole list, position $m\lg n$.

To process a single pair of sublists S and T (on the Source and Target tapes, respectively), we first set up a unary counter using an auxiliary tape to keep track of the length of these sublists, then scan the elements sequentially and write the sublist formed by merging S and T to the Result tape.
At each step we are deciding which of a pair of elements to write to the result tape.
We write the smaller of the two current elements to the Result tape.
We do this by copying the current elements to two auxiliary tapes, and during copying flagging which tape contains the smaller of the two elements.
The auxiliary tapes are then rewound, and the flagged tape is copied to the Result tape.
\end{proof}

Schnorr proved a time bound of $O(m(\lg m)^c(\lg n))$ steps for some unspecified $c$~\cite{Schnorr1978:satisfiability}.
By a more detailed analysis of the tape head motion than was considered in Schnorr's argument we have obtained this more precise exponent for the logarithmic factor of $c=1$.

\subsection{Improving the Time-Witness Bound For SAT}

We are now able to prove a time-witness upper bound on \sat.

\begin{lemma}
\label{lemma:sattiwi}
$\sat \in \tiwi(n\lg n,4n/\lg n)$.
\end{lemma}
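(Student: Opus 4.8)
The plan is to implement the standard guess-and-check decision procedure for \sat inside the time-witness framework, using the two ingredients just established: the bound on the number of variables from Lemma~\ref{lemma:satliterals} and the precise sorting time bound from Lemma~\ref{lemma:sorting}. Concretely, I would describe a deterministic language $L \in \dtime(n\lg n)$ such that $\sat = L[4n/\lg n]$, where the witness string of length $4n/\lg n$ encodes a truth assignment to the variables of the formula (one bit per variable). Since an $n$-bit CNF formula eventually has at most $4n/\lg n$ distinct variables by Lemma~\ref{lemma:satliterals}, a witness of this length is long enough to specify an assignment; the machine deciding $L$ reads the formula $x$ and the assignment $y$ (using the witness-size function to locate the boundary, as discussed after Lemma~\ref{lemma:tiwi:increase}) and must check, deterministically, that $y$ satisfies every clause of $x$.

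The key step is bounding the verification time by $O(n\lg n)$. The naive approach of scanning for each literal's value costs too much on a multitape Turing machine because of repeated head travel, so instead I would follow Schnorr's sorting-based approach: build the list of $(\text{variable index}, \text{assigned value})$ pairs from the witness, and separately extract from $x$ the list of literal occurrences tagged with their clause numbers; sort both lists by variable index using Lemma~\ref{lemma:sorting}; then merge them in a single pass so that each literal occurrence is annotated with the truth value of its variable; finally sort by clause number and scan once to check that every clause contains at least one satisfied literal. Each list has $m = O(n/\lg n)$ entries (there are $O(n/\lg n)$ variables, and at most $O(n)$ literal occurrences, each of which can be indexed using $O(\lg n)$ bits), and each integer fits in $O(\lg n)$ bits, so by Lemma~\ref{lemma:sorting} each sort takes $O(m(\lg m)(\lg n)) = O((n/\lg n)(\lg n)(\lg n)) = O(n\lg n)$ steps. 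The merges and linear scans are clearly within $O(n\lg n)$, and parsing the CNF encoding into these lists is a routine linear-or-near-linear pass. Hence $L \in \dtime(n\lg n)$.

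The main obstacle I anticipate is the careful accounting needed to confirm that all the auxiliary lists genuinely have $O(n/\lg n)$ entries of $O(\lg n)$ bits each, so that Lemma~\ref{lemma:sorting} applies with the claimed parameters: one must check that clause indices, literal-occurrence indices, and variable indices all fit in $O(\lg n)$ bits (true since the formula has $n$ bits total, so at most $n$ of any such object), and that the total number of literal occurrences is $O(n)$ rather than something larger under the chosen reasonable encoding. A secondary technical point is handling, via Lemma~\ref{lemma:tiwi:increase}, the finitely many small inputs for which the variable bound $4n/\lg n$ might not yet hold, and inputs that are not well-formed CNF formulas (which the decider simply rejects). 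Once these bookkeeping matters are settled, the construction gives $\sat = L[4n/\lg n]$ with $L \in \dtime(n\lg n)$, which is exactly $\sat \in \tiwi(n\lg n, 4n/\lg n)$.
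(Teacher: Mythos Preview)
Your proposal is correct and follows essentially the same approach as the paper: guess an assignment of length $4n/\lg n$ (justified by Lemma~\ref{lemma:satliterals}), then verify deterministically using Schnorr's sort-by-variable, substitute, sort-by-clause, scan procedure, with Lemma~\ref{lemma:sorting} supplying the $O(n\lg n)$ sort bound. The paper's proof is terser but structurally identical; your explicit framing via a language $L$ with $\sat = L[4n/\lg n]$ and your discussion of edge cases (small inputs, malformed formulas) are welcome elaborations rather than a different route.

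One small wobble worth tightening: you write that there are ``at most $O(n)$ literal occurrences'' yet take $m = O(n/\lg n)$ in the sorting estimate. The paper resolves this implicitly by observing that the augmented formula is at most twice the length of the original, so the list being sorted has total bit-length $O(n)$ and hence $m = O(n/\lg n)$ entries of $O(\lg n)$ bits each; you should make the same move explicitly (or note that under a reasonable encoding each literal occurrence already occupies $\Omega(\lg n)$ bits once padded to uniform width), so that your stated $m$ matches your later calculation.
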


\begin{proof}
By Lemma~\ref{lemma:satliterals}, \sat can be decided nondeterministically by guessing an assignment to the eventually at most $4n/\lg n$ variables, and then verifying that the assignment satisfies the input formula.
The verification is deterministic, and can be done by first making a copy of the input formula while annotating every literal with a clause number, then sorting the literals by variable identifier, replacing each literal by its value in the guessed assignment, and finally sorting the values by clause number and scanning to check that at least one literal in each clause is set to true.
This procedure is a special case of the more general algorithm suggested by Schnorr~\cite[Program $p_3$]{Schnorr1978:satisfiability}.
The augmented formula is at most twice as long as the original, and by Lemma~\ref{lemma:sorting} it can be sorted in at most $O((n/\lg n)\,(\lg n)^2)$ steps, which is $O(n\lg n)$ steps.
It follows that $\sat \in \tiwi(n\lg n,4n/\lg n)$.
\end{proof}

We restate this in terms of the $\ntigu$ notation.

\satbound*

\begin{proof}
Follows immediately from Lemmas~\ref{lemma:sattiwi} and \ref{lemma:tiwi:ntigu}.
\end{proof}

Our time-witness bound for \sat then yields the main result of this section.

\weakegh*

\begin{proof}
By Lemmas~\ref{lemma:sattiwi} and \ref{lemma:tiwi:increase}, we conclude that $\sat \in \tiwi(n\lg n,4n/\lg n) \subseteq \tiwi(n\lg n,n)$.
Now we can apply Corollary~\ref{cor:egh:bettersat}, and so if $\dtime(n\lg n) \subseteq \ntime(n)$ then $\tiwi(n\lg n,n) \subseteq \ntime(n)$.
\end{proof}

\section{Conclusion and Further Work}
\label{sec:misc}

Our contributions in this work demonstrate that effective guessing has unlikely consequences.
We therefore propose an \emph{ineffective guessing conjecture}, that it is not in general possible to speed up a computation significantly by using more nondeterminism.

More precisely, we propose the following ineffective guessing conjecture:

\begin{conjecture}[IGC]
\label{conjecture:igc}
$\dtime(t(n)) \not\subseteq \ntime(n)$ for all time-constructible functions $t(n)$ such that $t(n) = \omega(n\lg n)$.
\end{conjecture}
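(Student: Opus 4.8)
Strictly speaking this is a conjecture rather than a statement one can currently prove, so the proposal here is a reduction of Conjecture~\ref{conjecture:igc} to a single clean open problem, together with an assessment of why that problem resists the available techniques.

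\textbf{Step 1: reduce the conjecture to a lower bound for \sat.} I would first observe that the conjecture is implied by the single assertion $\sat \notin \ntime(n)$ (under a reasonable encoding, as in Theorem~\ref{theorem:sat:nlin}). Indeed, suppose $t(n) = \omega(n\lg n)$ and, towards a contradiction, that $\dtime(t(n)) \subseteq \ntime(n)$. Since $n\lg n = o(t(n))$ we have $\dtime(n\lg n) \subseteq \dtime(t(n)) \subseteq \ntime(n)$, and then Theorem~\ref{theorem:sat:nlin} yields $\sat \in \ntime(n)$, contradicting the assumed lower bound. (This direction does not even use time-constructibility of $t$.) So it suffices to prove $\sat \notin \ntime(n)$, and this would actually establish the conjecture for every time-constructible superlinear $t$ at once, not only for those with $t(n)=\omega(n\lg n)$.

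\textbf{Step 2: attack $\sat \notin \ntime(n)$.} Two lines of attack are natural. The first is an indirect-diagonalization / alternation-trading argument in the spirit of Paul et al.\ and Santhanam's refinements: assume $\sat \in \ntime(n)$, propagate this through \NPTIME-complete padding and a nondeterministic time hierarchy theorem, and hope to contradict a hierarchy theorem. The second is a time--space tradeoff lower bound in the style of Fortnow--Van~Melkebeek--Williams, strengthened so as to dispense with the space bound on the nondeterministic side; one could also try to pair the tight upper bound of Theorem~\ref{theorem:sat} with a matching lower bound on the number of nondeterministic bits any linear-time machine must spend on \sat.

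\textbf{Main obstacle.} The difficulty --- and the reason one can only conjecture this --- is the well-known barrier that no technique is known to separate deterministic (or even co-nondeterministic) linear time from $\ntime(n)$ in the absence of a space restriction. Alternation trading collapses harmlessly here: from $\dtime(t(n)) \subseteq \ntime(n)$ one does derive $\dtime(t(n)) \subsetneq \ntime(t(n))$ via the nondeterministic time hierarchy, but as discussed above this is an \emph{open} statement once $t(n)$ grows past $n\lg^{*}n$, not a false one, so no contradiction ensues. Time--space tradeoff methods crucially exploit small space and yield only bounds of the form $\sat \notin \dtime(n^{1+\eps})$, nowhere near ruling out $\ntime(n)$. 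I therefore expect that establishing the conjecture, even for the single borderline function $t(n) = n(\lg n)^2$, requires a genuinely new lower-bound idea; the honest status of the statement is that it is offered as a conjecture whose plausibility rests on the unlikely consequences catalogued in Theorems~\ref{theorem:p:nlin}, \ref{theorem:sat:nlin}, and \ref{theorem:nlin} rather than on a proof.
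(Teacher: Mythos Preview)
Your reading is correct: the paper does not prove this statement. It is explicitly presented as an open conjecture in the concluding section, motivated by the unlikely consequences established in Theorems~\ref{theorem:p:nlin}, \ref{theorem:sat:nlin}, and~\ref{theorem:nlin}, and situated between the excessively-weak and excessively-strong EGH variants. No proof or proof sketch is offered, nor is one claimed.

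Your Step~1 reduction is valid and is in fact a sharper packaging than anything the paper makes explicit. The paper observes that the weak EGH (the hypothesis of Theorem~\ref{theorem:sat:nlin}) could still hold even under IGC, but it does not spell out that the single statement $\sat \notin \ntime(n)$ implies the full conjecture for every $t(n)=\omega(n\lg n)$. Your derivation of this via $\dtime(n\lg n) \subseteq \dtime(t(n)) \subseteq \ntime(n)$ and Theorem~\ref{theorem:sat:nlin} is correct, and the remark that time-constructibility is not needed for this direction is also right.

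Your Step~2 and the obstacle analysis are an honest assessment of the landscape and are consistent with the paper's own framing: alternation trading is known to reach only $t(n)=o(n\lg^{*}n)$ (Santhanam), and time--space tradeoffs need a space restriction. There is nothing to correct here; the statement remains open and your proposal appropriately treats it as such.
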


According to this ineffective guessing conjecture, it is not in general possible to obtain even a slightly greater than logarithmic speedup by making essentially every step of a computation nondeterministic.  Furthermore, our ineffective guessing conjecture implies that the effective guessing hypothesis from Theorem~\ref{theorem:p:nlin}, with a polynomial speedup, is too strong while the weaker effective guessing hypothesis from Theorem~\ref{theorem:sat:nlin} could still hold.

To put Theorems \ref{theorem:nlin} and \ref{theorem:sat:nlin} into context,
the effective guessing hypotheses used in these theorems fall between two extremes.
\begin{align*}
\textbf{excessively-weak EGH}   &: \;
  \dtime(n) \subseteq \ntigu(n, 0)
\\
\textbf{weak EGH}               &: \;
  \dtime(n\lg n) \subseteq \ntime(n)
\\
\textbf{strong EGH}             &: \;
  (\exists c > 1) \; (\forall d > 0) \; \dtime(n^c) \subseteq \ntigu(n, n^d)
\\
\textbf{excessively-strong EGH} &: \;
  (\exists c > 0) \; \dtime(n^{2+c}) \subseteq \ntigu(n,\lg n)
\end{align*}
The hypothesis from Theorem~\ref{theorem:nlin} is the strong EGH, while Theorem~\ref{theorem:sat:nlin} posits the weak EGH.
To be clear, both of these hypotheses currently remain open, although we have shown that they have somewhat unlikely consequences.

Excessively weak forms of effective guessing are always true such as $\dtime(n)$ $\subseteq \ntigu(n, w(n))$ which holds for any function $w(n)$, even ${w(n)=0}$.
This therefore forms one extreme, a hypothesis about effective guessing that is too weak to be interesting.
On the other hand, we show in the following that excessively strong forms of effective guessing (such as that stated above) can be ruled out unconditionally.

\begin{lemma}
\label{lemma:brute:consequence}
If $t(n) = \omega(n^2)$ is a function that is computable in $t(n)$ steps, then \[
  \dtime(t(n)) \not\subseteq \ntigu(n,\lg n).
\]
\end{lemma}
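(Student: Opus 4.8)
The plan is to derive a contradiction with a deterministic time hierarchy theorem by brute-force enumeration over the logarithmically many nondeterministic bits. Suppose for contradiction that $\dtime(t(n)) \subseteq \ntigu(n,\lg n)$ where $t(n) = \omega(n^2)$ is computable within $t(n)$ steps. First I would record the key numerical fact: a machine using at most $\lg n$ nondeterministic bits has at most $2^{\lg n} = n$ distinct guess strings. Hence if a language $L$ is decided by a nondeterministic machine running in $O(n)$ time with $\lg n$ guess bits, then a deterministic machine can decide $L$ by cycling through all $n$ possible witness strings, running the $O(n)$-time verification on each, and accepting if any branch accepts. This costs $O(n) \cdot O(n) = O(n^2)$ steps, plus the overhead of enumerating the guess strings in lexicographic order (a unary/binary counter of length $\lg n$, incremented $n$ times, contributing only $O(n \lg n)$ extra steps) and of computing the bound $\lg n$ itself, which is negligible. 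Therefore $\ntigu(n,\lg n) \subseteq \dtime(n^2)$.

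Combining this containment with the hypothesis gives $\dtime(t(n)) \subseteq \dtime(n^2)$. Since $t(n) = \omega(n^2)$ and $t(n)$ is time-constructible (computable in $t(n)$ steps), this contradicts the deterministic time hierarchy theorem: there is a language decidable in $O(t(n))$ time but not in $O(n^2)$ time. (Concretely, one picks any time-constructible $t'(n)$ with $n^2 = o(t'(n))$ and $t'(n) = O(t(n))$ — for instance $t'(n) = n^2 \lceil \lg n \rceil$ if $t(n)$ itself is not already convenient to use directly — and applies the hierarchy theorem to separate $\dtime(t'(n))$ from $\dtime(n^2)$.) This contradiction establishes the lemma.

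The step I expect to require the most care is the bookkeeping in the brute-force simulation: one must be precise that enumerating all $n$ candidate witnesses and resetting the verifier's tapes between trials stays within $O(n^2)$, not $O(n^2 \lg n)$ or worse. The cleanest way is to note that writing down the next witness string and rewinding the relevant tapes costs $O(n)$ per trial (dominated by re-scanning the $O(n)$-length configuration), so the $n$ trials cost $O(n^2)$ in total, and the counter maintenance is a lower-order term. A secondary subtlety is making sure the multitape model is used consistently — the simulation increases the number of tapes, but as elsewhere in the paper this is harmless since we are not attempting to control the tape count. One could alternatively phrase the whole argument through the $\tiwi$ machinery: the hypothesis implies, via Lemma~\ref{lemma:tiwi:ntigu}, that $\dtime(t(n)) \subseteq \tiwi(n,\lg n)$, and then observe directly from Definition~\ref{def:tiwi} that $\tiwi(n,\lg n) \subseteq \dtime(n^2)$ since a $\dtime(n)$ language projected by $\lg n$ bits can be decided by trying all $n$ extensions — but the direct machine argument is equally short and more transparent here.
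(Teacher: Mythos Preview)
Your proposal is correct and follows essentially the same route as the paper: show $\ntigu(n,\lg n) \subseteq \dtime(n^2)$ by brute-forcing the $2^{\lg n}=n$ witnesses at $O(n)$ time each, chain with the hypothesis to get $\dtime(t(n)) \subseteq \dtime(n^2)$, and contradict the deterministic time hierarchy theorem. The paper's proof is a two-line version of this, omitting the bookkeeping details and the alternative $\tiwi$ formulation you sketch.
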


\begin{proof}
Suppose $t(n) = \omega(n^2)$ such that $t(n)$ is computable in $t(n)$ steps and $\dtime(t(n)) \subseteq \ntigu(n,\lg n)$.
By trying all $2^{\lg n} = n$ possible values for the witness and checking each in $O(n)$ time we have $\ntigu(n,\lg n) \subseteq \dtime(n^2)$.
Thus $\dtime(t(n)) \subseteq \dtime(n^2)$.
As $t(n) = \omega(n^2)$ this then contradicts the deterministic time hierarchy theorem~\cite{Furer1982:tight}.
\end{proof}

\begin{proposition}
\label{proposition:brute:consequence}
$\dtime(n^{2+c}) \not\subseteq \ntigu(n,\lg n)$ for all $c>0$.
\end{proposition}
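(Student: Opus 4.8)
The plan is to obtain Proposition~\ref{proposition:brute:consequence} as an immediate instantiation of Lemma~\ref{lemma:brute:consequence}. Fix an arbitrary constant $c>0$ and set $t(n) = \lceil n^{2+c}\rceil$ (the ceiling ensures integer values for non-integer exponents, and does not affect the asymptotics). To apply the lemma I would need to check its two hypotheses for this choice of $t$.

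First, I would verify that $t(n) = \omega(n^2)$. This is immediate since $n^{2+c}/n^2 = n^c \to \infty$ as $n \to \infty$ for any fixed $c>0$, and the ceiling operation changes the value by at most $1$, hence cannot affect this limit. Second, I would verify that $t(n)$ is computable in $t(n)$ steps, i.e.\ that $n^{2+c}$ is time-constructible in the weak sense required by the paragraph preceding Lemma~\ref{lemma:brute:consequence}: a deterministic Turing machine, given a length-$n$ input, writes a word of exactly $t(n)$ symbols within $t(n)$ steps. Since $t(n) \ge n^2$ eventually, and $\lceil n^{2+c}\rceil$ can be computed (e.g.\ by repeated multiplication and a final rounding step for the fractional part of the exponent, or simply by the standard constructibility of $n^{k}$ together with interpolation for the fractional part) using far fewer than $n^{2}$ steps, there is ample time budget; I would note this holds eventually and fold the finitely many small $n$ into a lookup table as usual.

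With both hypotheses checked, Lemma~\ref{lemma:brute:consequence} yields $\dtime(t(n)) \not\subseteq \ntigu(n,\lg n)$. Since $t(n) = \Theta(n^{2+c})$, we have $\dtime(n^{2+c}) = \dtime(t(n))$, so $\dtime(n^{2+c}) \not\subseteq \ntigu(n,\lg n)$. As $c>0$ was arbitrary, the proposition follows.

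I do not expect any real obstacle here: the content of the argument is entirely contained in Lemma~\ref{lemma:brute:consequence} (whose proof is the brute-force enumeration of the $2^{\lg n}=n$ witnesses combined with the tight deterministic time hierarchy theorem), and what remains is only the routine bookkeeping of confirming that $n^{2+c}$ is super-quadratic and time-constructible within its own bound. The mildest subtlety worth a sentence in the write-up is handling a non-integer exponent via the ceiling function, which I would dispatch as above.
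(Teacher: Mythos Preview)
Your proposal is correct and matches the paper's approach exactly: the paper's proof is the single sentence ``The result follows from Lemma~\ref{lemma:brute:consequence} for $t(n) = n^{2+c}$.'' Your additional remarks about time-constructibility and the ceiling for non-integer exponents are harmless elaborations of what the paper leaves implicit.
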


\begin{proof}
The result follows from Lemma~\ref{lemma:brute:consequence} for $t(n) = n^{2+c}$.
\end{proof}

Since the strong EGH trivially implies the weak EGH, we can therefore rank the hypotheses in terms of logical strength as follows:

\medskip

\begin{centering}

\begin{tabular}{cl}
\textbf{excessively-strong EGH} & \textbf{ [false]}
\\
$\Downarrow$
\\
\textbf{strong EGH}             & \textbf{ [open]}
\\
$\Downarrow$
\\
\textbf{weak EGH}               & \textbf{ [open]}
\\
$\Downarrow$
\\
\textbf{excessively-weak EGH}   & \textbf{ [true]}
\end{tabular}

\end{centering}

\medskip

Finally, our effective guessing hypotheses focus on nondeterministic linear time because the Paul et al.~\cite{Paul1983:determinism} result that $\dtime(n) \subsetneq \ntime(n)$ invites many questions about the potential computational power of nondeterministic linear time.
A natural future direction would be to consider the computational power of $\ntime(n^k)$ for $k > 1$.
In particular, it is still is not known whether $\dtime(n^k) \subsetneq \ntime(n^k)$ for any $k > 1$.
Furthermore, there are many additional open questions such as whether any languages exist in $\ntime(n^k) \setminus \ntigu(n^k,o(n^k))$.

\subsection*{Acknowledgments}

The authors thank Ryan Williams, Rahul Santhanam, and Kenneth Regan for useful discussions.  We also acknowledge the helpful discussion and comments from \cite{Wehar2015:question} which have helped us to provide a detailed treatment of Theorem~\ref{theorem:sat}.

\subsection*{Funding}

The first author's work was supported by EPSRC grant EP/P015638/1.

\bibliographystyle{plainurl-modern}
\bibliography{references}

\end{document}